\documentclass{elsarticle}

%
\newcommand{\myqed}{\hfill $\blacksquare$}
\usepackage{graphicx}
\usepackage{hyperref}
\usepackage{tabularx,xparse}
\usepackage{amsmath,ntheorem,amssymb}
\usepackage{latexsym}
\usepackage{graphicx} 
\usepackage{multirow}
\usepackage{microtype}
\usepackage{rotating}
\usepackage{xurl}
\usepackage{color}
\usepackage{comment}
\usepackage{paralist}
\usepackage{booktabs}
\usepackage{bm}
\usepackage[]{units}
\usepackage{tikz}
\usepackage{natbib}
\usepackage{tcolorbox}
\usepackage{complexity,multicol}

\newcolumntype{L}{>{$}c<{$}}
\usepackage{colortbl,smartdiagram}
\usesmartdiagramlibrary{additions}
\definecolor{C1}{RGB}{215,215,215}

\setlength{\parskip}{0pt}
\setlength{\abovedisplayskip}{0pt}
\setlength{\belowdisplayskip}{0pt}
\setlength{\abovedisplayshortskip}{0pt}
\setlength{\belowdisplayshortskip}{0pt}

\newcommand{\rank}{\operatorname{rank}}
\usepackage{mathtools}
\DeclarePairedDelimiter\ceil{\lceil}{\rceil}
\DeclarePairedDelimiter\floor{\lfloor}{\rfloor}
\usepackage{tikz}
\usetikzlibrary{arrows,decorations.markings,patterns,positioning,shapes,decorations.pathreplacing,calc}
\tikzset{draw half paths/.style 2 args={%
  decoration={show path construction,
    lineto code={
      \draw [#1] (\tikzinputsegmentfirst) -- 
         ($(\tikzinputsegmentfirst)!0.5!(\tikzinputsegmentlast)$);
      \draw [#2] ($(\tikzinputsegmentfirst)!0.5!(\tikzinputsegmentlast)$)
        -- (\tikzinputsegmentlast);
    }
  }, decorate
}}
\tikzstyle{vertex} = [circle, draw=black, fill=black, scale= 0.5]
\usetikzlibrary{decorations.pathmorphing}
\tikzset{snake it/.style={decorate, decoration=snake}}

\usetikzlibrary{arrows.meta}
\tikzset{%
  >={Latex[width=2mm,length=2mm]},
            base/.style = {rectangle, rounded corners, draw=black,
                           minimum width=4cm, minimum height=0.6cm,
                           text centered},
  activityStarts/.style = {base, fill=blue!30},
       startstop/.style = {base, fill=red!30},
    activityRuns/.style = {base, fill=green!30},
         process/.style = {base, minimum width=2.5cm, fill=orange!15},
}


\newtheorem{theorem}{Theorem}[section]

\newtheorem{definition}[theorem]{Definition}
\newtheorem{lemma}[theorem]{Lemma}

\newtheorem{corollary}[theorem]{Corollary}
\newtheorem{observation}[theorem]{Observation}

\newproof{proof}{Proof}
\newtheorem{myclaim}[theorem]{Claim}




\usepackage[textsize=tiny,textwidth=2cm,shadow]{todonotes}

\newcommand{\skcom}[1]{\todo[color=orange!50!white]{{S}: #1}}
\newcommand{\dmrev}[1]{\textcolor{black}{#1}}

\newcommand\fauxsc[1]{\fauxschelper#1 \relax\relax}
\def\fauxschelper#1 #2\relax{%
  \fauxschelphelp#1\relax\relax%
  \if\relax#2\relax\else\ \fauxschelper#2\relax\fi%
}
\def\Hscale{.85}\def\Vscale{.72}\def\Cscale{1.10}
\def\fauxschelphelp#1#2\relax{%
  \ifnum`#1>``\ifnum`#1<`\{\scalebox{\Hscale}[\Vscale]{\uppercase{#1}}\else%
    \scalebox{\Cscale}[1]{#1}\fi\else\scalebox{\Cscale}[1]{#1}\fi%
  \ifx\relax#2\relax\else\fauxschelphelp#2\relax\fi}



\journal{Discrete Applied Mathematics}
\begin{document}
\begin{frontmatter}
\title{{\bf On weakly and strongly popular rankings}\tnoteref{t1}
}
\tnotetext[t1]{A 2-page extended abstract of this paper appeared at AAMAS 2021.}




\author[1]{Sonja Kraiczy\corref{cor1}}
\ead{sonja.kraiczy@cs.ox.ac.uk}
\author[2,3]{\'{A}gnes Cseh}
\ead{cseh.agnes@krtk.hu}
\author[3]{David Manlove}
\ead{david.manlove@glasgow.ac.uk}
\cortext[cor1]{Corresponding author}
\address[1]{Department of Computer Science, University of Oxford, UK}
\address[2]{Institute of Economics, Centre for Economic and Regional Studies, Hungary}
\address[3]{Department of Mathematics, University of Bayreuth, Germany}
\address[4]{School of Computing Science, University of Glasgow, UK}



\begin{abstract}
    Van Zuylen et al.~\cite{VSW14} introduced the notion of a popular ranking in a voting context, where each voter submits a strict ranking of all candidates. A popular ranking $\pi$ of the candidates is at least as good as 
    any other ranking $\sigma$ in the following sense: if we compare $\pi$ to $\sigma$, at least half of all voters will always weakly prefer~$\pi$. Whether a voter prefers one ranking to another is calculated based on the Kendall distance.

    A more traditional definition of popularity---as applied to popular matchings, a well-established topic in computational social choice---is stricter, because it requires at least half of the voters \emph{who are not indifferent between $\pi$ and $\sigma$} to prefer~$\pi$. In this paper, we derive structural and algorithmic results in both settings, also improving upon the results in~\cite{VSW14}. We also point out connections to the famous open problem of finding a Kemeny consensus with \dmrev{three} voters.
\end{abstract}

\begin{keyword}
majority rule \sep
Kemeny consensus \sep
complexity \sep
preference aggregation
\sep popular matching
\end{keyword}
\end{frontmatter}
\section{Introduction}

A fundamental question in preference aggregation is the following: given a number of voters who rank candidates \dmrev{from most-preferred to least-preferred}, can we construct a ranking that expresses the preferences of the entire set of voters as a whole? A common way of evaluating how close the constructed ranking is to a voter's preferences is the Kendall distance\dmrev{~\cite{Ken38}}, which measures the pairwise disagreements between two rankings. Among others, a well-known rank aggregation method is the Kemeny ranking method~\cite{Kem59}, in which the winning ranking minimises the sum of its Kendall distances to the voters' rankings. 

For the preference aggregation problem, van Zuylen et al.~\cite{VSW14} introduce a new rank aggregation method called popular ranking, which is also based on the Kendall distance. Each voter can compare two given rankings $\pi$ and $\sigma$, and prefers the one that is closer to her submitted ranking in terms of the Kendall distance. Van Zuylen et al.\ define $\pi$ to be a winning ranking for a given \dmrev{set of voters' rankings} 
if for any ranking $\sigma$, at least half of the voters prefer $\pi$ to $\sigma$ or are indifferent between them. This implies that there is no ranking $\sigma$ such that switching to $\sigma$ from $\pi$ would benefit a majority of all voters. 

According to the definition of popularity in~\cite{VSW14}, even in a situation where exactly half of the voters are indifferent between rankings $\pi$ and $\sigma$\dmrev{---we call these abstaining voters---}, whilst the other half of the voters prefer $\sigma$ to $\pi$, the ranking $\sigma$ is not more popular than~$\pi$. This example demonstrates how \dmrev{challenging} 
it is for the dissatisfied voters to \dmrev{propose} 
a ranking that overrules $\pi$---the definition requires them to find a profiting set of voters who build an \emph{absolute majority}, that is, a majority of \emph{all} voters for this endeavour.

A straightforward option would be to require only a \emph{simple majority}, this is, a majority of the \emph{non-abstaining} voters, to profit from switching to $\sigma$ from~$\pi$. Excluding the abstaining voters in a pairwise majority voting rule is common practice~\cite{DE10}. It is also analogous to the classical popularity notion in the matching literature~\dmrev{\cite{AIKM07}}. In this paper, we propose an alternative definition of a popular ranking. We define $\pi$ to be a  \emph{strongly popular ranking} if for every ranking $\sigma$, at least half of the non-abstaining voters prefer $\pi$ to~$\sigma$. This means that switching from $\pi$ to $\sigma$ would harm at least as many voters as it would benefit.  \dmrev{The weaker notion of a popular ranking as defined by van Zuylen et al.~\cite{VSW14} is then defined here as a \emph{weakly popular ranking}.}

\subsection{\dmrev{Our contribution}}
\label{sec:cont}
We study both the weaker notion of popularity from~\cite{VSW14} and the stronger notion of popularity analogous to the one in the matching literature, which excludes abstaining voters. Our most important results are as follows.
\begin{enumerate}
    \item \dmrev{We give a sufficient condition for the existence of a \dmrev{weakly popular ranking (Theorem~\ref{tightc}).}}
    \item We give a sufficient condition for the two popularity notions to be equivalent for a given ranking~$\pi$ \dmrev{(Lemma~\ref{condeq}).} \dmrev{This condition also implies that at least three abstaining voters between two rankings are needed for the two popularity notions to differ, which allows us to deduce the following \dmrev{result}. 
    }
    \item  For at most \dmrev{five} voters, the two \dmrev{popularity} notions are equivalent \dmrev{(Theorem~\ref{atmost5}).} \dmrev{We provide an example showing that with six voters this equivalence does not hold anymore (Theorem~\ref{example}).}
    \item In the case of \dmrev{two} or \dmrev{three} voters, one can find a popular ranking of either kind and verify the weak or strong popularity of a given ranking in polynomial time \dmrev{(Lemmas~\ref{cl:twovoters} and~\ref{3voters}, Theorem~\ref{th:smallk}).}
    \item The problem of verifying the weak or strong popularity of a given ranking is polynomial-time solvable for \dmrev{four} voters if and only if it is polynomial-time solvable for \dmrev{five} voters \dmrev{(Theorem~\ref{45})}.
    \item \dmrev{Finally, we establish a connection to a central open problem in preference aggregation.}\dmrev{ We show that if} finding a ranking that is more popular in either sense than a given ranking in a \dmrev{given set of four (or five) voters' rankings} were polynomial-time solvable, then the famously open Kemeny consensus problem for \dmrev{three} voters would also be polynomial-time solvable \dmrev{(Corollary~\ref{cor:kemeny}). }
    \dmrev{ The path to this result leads through an even stronger observation: If finding a ranking preferred to a given ranking by \emph{all} the \dmrev{three}  
    voters were polynomial-time solvable, then the Kemeny consensus problem for three voters would also be polynomial-time solvable \dmrev{(Theorems~\ref{th:Kc} \dmrev{ and \ref{th:Kr})}}.}
\end{enumerate}

\subsection{Related literature} Aggregating voters' preferences given as rankings of candidates has been challenging researchers for decades. The most common approach to this problem is to search for a ranking that minimises the sum of the distances to the voters’ rankings. If the Kendall distance~\cite{Ken38} is used as the metric on rankings, then this optimality concept corresponds to the Kemeny consensus~\dmrev{\cite{Kem59}}. 
\dmrev{Characteristic properties and computational aspects of the Kemeny consensus have been studied in a number of papers~\cite{BTT89,DK04,HSV05,BFG+09,ADS16}. The problem setting \dmrev{with} a small number of voters received special interest.} Deciding whether a given ranking is a Kemeny consensus is $\coNP$-complete~\cite{FH21}, and calculating a Kemeny consensus is \dmrev{$\NP$}-hard~\cite{BTT89} \dmrev{even if there is a fixed number of voters $k$, where $k=7$~\cite{BBG+19}, or where $k$ is even and $k\geq 4$~\cite{DKN+01}}. The complexity of the problem for \dmrev{three} and \dmrev{five} voters is pointed out as an interesting open problem in~\cite{BBF+09,BBG+19}. 
\dmrev{ Milosz et al.~\cite{MHP18} focus on the case of three voters, and establish a link with the 3-Hitting Set problem~\cite{Kar72} by  considering 3-cycles in the majority graph.
}

Majority voting rules offer another natural way of aggregating voters' preferences. The earliest reference for this might be from Condorcet~\cite{Con85}, who uses pairwise comparisons to calculate the winning candidate, establishing his famous paradox on the smallest \dmrev{set of voters' rankings} not admitting any majority winner.

In some settings, handling abstaining voters plays a crucial role. The absolute and simple majority voting rules have both been extensively discussed in the setting where the goal is to choose the winning candidate~\cite{BL14a,BL14b}. Vermeule~\cite{Ver07} focuses on strategic minorities and demonstrates the effect of the simple majority rule compared to the absolute majority rule based on data from decisions made by the United States Congress. By undertaking a probabilistic analysis, Dougherty and Edward~\cite{DE10} discuss the differences between the two rules. Felsenthal and Machover~\cite{MF97} generalise simple voting games to ternary voting games by adding the possibility to abstain. 

The concept of majority voting readily translates to other scenarios, where voters submit preference lists. One such field is the area of matchings under preferences, where popular matchings~\citetext{\citealp{Gar75},~\citealp{AIKM07},~\citealp[Chapter~7]{Man13},~\citealp{Cse17}} serve as a voting-based alternative concept to the well-known notion of stable matchings~\cite{GS62} in two-sided markets. In short, a popular matching $M$ is a simple majority winner among all matchings, because it guarantees that no matter what alternative matching is offered on the market, \dmrev{at least half} of the non-abstaining \dmrev{voters} 
will opt for~$M$. 

Besides two-sided matchings, majority voting has also been defined for the house allocation problem~\cite{AIKM07,SM10}, the roommates problem~\cite{FKP+19,GMS+19}, spanning trees~\cite{Dar13}, permutations~\cite{VSW14}, the ordinal group activity selection problem~\cite{Dar18}, and very recently, for branchings~\cite{KKM+20}. The notion of popularity is aligned with simple majority in all these papers, with one exception, namely~\cite{VSW14}, which defines popularity based on the absolute majority rule.

A part of this work revisits the paper from van Zuylen et al.~\cite{VSW14}. They show that a popular ranking---according to their definition of popularity---need not necessarily exist. More precisely, they show that the acyclicity of a structure known as the majority graph is a necessary, but not sufficient condition for the existence of a popular ranking. They also prove that if the majority graph is acyclic, then \dmrev{we} can efficiently \dmrev{compute a ranking (corresponding to a toplogical sort of the majority graph)}, which may or may not be popular, but for which the voters have to solve an $\NP$-hard problem to compute a ranking that a majority of them prefer. 

\subsection{Structure of the paper} In Section~\ref{prelims} we introduce the necessary definitions and notations used in the following sections. Section~\ref{strandweak} deals with the relationship between the two different popularity notions. In Section~\ref{smallvoters} we study the complexity of the problems of deciding whether a given ranking is weakly or strongly popular with a small number of voters. We demonstrate the strong connection to the Kemeny consensus problem \dmrev{in Section~\ref{sec:kemeny}. Finally, we} lay out some problems that still remain to be answered in Section~\ref{summary}. \dmrev{Throughout this paper, we use the example instance depicted in Figure~\ref{fi:ex} to illustrate concepts that we will define.}

\section{Preliminaries}
\label{prelims}

We start this section with the formal definitions of various standard notions in voting theory in Section~\ref{se:rankings}. Then, in Section~\ref{se:preferences}, we introduce weakly and strongly popular rankings and the decision problems we will later analyse.

\subsection{Rankings \dmrev{and Kendall distance}}
\label{se:rankings}
We are given a set $C=\{1,\ldots,m\}$ of candidates and a set \dmrev{$V=\{v_1,\ldots, v_n\}$} of $n$ voters. A (preference) \emph{ranking} $\pi$ is a \dmrev{total} 
order over~$C$. When exhibiting a specific ranking, we will often enclose parts of it in square brackets, e.g.\ we may write $[1,2],[3,4]$ instead of $1,2,3,4$. These brackets can be ignored and are simply used for better readability in \dmrev{sets of rankings} with specific structural properties.
The \emph{rank} of candidate $a$ in ranking $\pi$ is the position (counting from~1) it appears at in $\pi$, and it is denoted by $\rank_{\pi}[a]$.  A 
\textit{profile}~$P=(\pi_{v_1},\ldots,\pi_{v_n})$ \dmrev{over $C$} is a list of rankings
, where $\pi_{v_i}$ is the ranking associated with voter $v_i\in V$. 
 An example is depicted in Figure~\ref{fi:ex}. 
 We say that voter $v$ \emph{prefers} candidate $a$ to candidate $b$ if $\rank_{\pi_v}[a] < \rank_{\pi_v}[b]$. 
In Figure~\ref{fi:ex}, voter $v_1$ prefers candidate $1$ to candidate $2$, and $\rank_{\pi_{v_{1}}}[5]=6$. 
For candidates $a$ and $b$, $\{a,b\}$ denotes the unordered pair of them, while $(a,b)$ denotes an ordered pair.

\begin{figure}[th]
\begin{align*} 
\pi_{v_{1}} & = [1,2,3],[6,4,5],[8,9,7] \\ 
\pi_{v_{2}} & =[2,3,1],[4,5,6],[9,7,8] \\
\pi_{v_{3}} & =[3,1,2],[5,6,4],[7,8,9] \\
\pi_{v_{4}} & =[1,2,3],[4,5,6],[7,8,9] \\
\pi_{v_{5}} & = [1,2,3],[5,4,6],[9,7,8] \\
\pi_{v_{6}} & = [1,2,3],[5,6,4],[7,9,8] 
\end{align*}
\caption{A 
profile $P=(\pi_{v_1}\ldots,\pi_{v_6})$ over $\{1,2,\ldots,9\}$.}
\label{fi:ex}
\end{figure}

We say that voter $v_1$ (or ranking $\pi_{v_1}$) \emph{agrees} with voter $v_2$ (or with ranking $\pi_{v_2}$) in the order of two distinct candidates $a$ and $b$ if $v_1$ and $v_2$ either both prefer $a$ to $b$ or they both prefer $b$ to~$a$. 
Otherwise they \emph{disagree} in the order of $a$ and~$b$. The similarity between two rankings can be measured by various metrics defined on permutations. Possibly the most common metric, the Kendall distance~\cite{Ken38}, is defined below.  

\begin{definition}
\label{def:kd}
The \emph{Kendall distance} $K(\pi,\sigma)$ between two 
rankings $\pi$ and $\sigma$ is defined as the number of pairwise disagreements between $\pi$ and $\sigma$, or, formally as
\begin{eqnarray*}
K(\pi,\sigma) = \dmrev{|\{(a,b)\dmrev{\in C\times C} : \rank_{\pi}[a]>\rank_{\pi}[b] \mbox{ and } \rank_{\sigma}[a]<\rank_{\sigma}[b]\}|.}\end{eqnarray*}
\end{definition}
\dmrev{Alongside Definition \ref{def:kd},} the Kendall distance \dmrev{has an interpretation, which we will also use, in terms of the bubble sort algorithm~\cite{Fri56}. Given a sequence $\langle \sigma_1,\sigma_2,\dots,\sigma_m\rangle$, bubble sort proceeds inductively by considering the $i$th element $\sigma_i$ ($i=1,2,\dots,m$), and assuming that $\langle\sigma_1,\sigma_2,\dots \sigma_{i-1}\rangle$ are already in the correct order, $\sigma_i$ is swapped with its predecessor in $\langle\sigma_1,\sigma_2,\dots,\sigma_{i-1},\sigma_i\rangle$ as long as $\sigma_i$ is larger than its predecessor in this subsequence.  The Kendall distance is also called the \emph{bubble sort distance}} \dmrev{because it corresponds to the number of swaps that bubble sort} executes when converting ranking $\pi$ to ranking~$\sigma$. To be more precise, let us first define a total order on $1,\ldots, n$ such that, under this order, ranking $\sigma$ is sorted in increasing order. 
We define the \textit{bubble swap path} from a ranking $\pi$ to $\sigma$ as the sequence $\pi_{0}:=\pi, \pi_{1},\ldots, \pi_{k}:=\sigma$ of intermediate rankings obtained when sorting $\pi$ using the bubble sort algorithm. \dmrev{Note that $K(\pi,\sigma)=k$.} We call the change $\pi_{i}\rightarrow \pi_{i+1}$ a \textit{swap}. Alternatively we denote the swap  by the consecutive candidates $a$ and $b$ it interchanges: $(b,a)\rightarrow (a,b)$. We say that a swap $(b,a)\rightarrow(a,b)$ is \emph{good} for voter $v$ if $v$ prefers $a$ to $b$, otherwise this swap is \emph{bad} for~$v$. Note that if the swap $\pi_{i}\rightarrow \pi_{i+1}$ is good for $v$, then $K(\pi_{i+1},\pi_{v})=K(\pi_{i},\pi_{v})-1$ and, analogously, if the swap is bad for $v$, then $K(\pi_{i+1},\pi_{v})=K(\pi_{i},\pi_{v})+1$.

\dmrev{For example, with respect to the profile shown in Figure \ref{fi:ex}, $K(\pi_{v_1},\pi_{v_2})=6$, since it takes two swaps to insert each of $1$, $6$ and $8$ into the correct order relative to $[2,3,1]$, $[4,5,6]$ and $[9,7,8]$, respectively, in the bubble swap path from $\pi_{v_1}$ to~$\pi_{v_2}$.}

Let $V(a,b) \subseteq V$ be the set of voters who prefer candidate $a$ to $b$, i.e.\ $V(a,b)=\{v\in V: \rank_{\pi_{v}}(a) < \rank_{\pi_{v}}(b)\}$. The \textit{majority graph} belonging to a profile is defined as the directed graph which has as vertices the candidates and an arc from candidate $a$ to candidate $b$ if a majority of the voters prefer $a$ to $b$, i.e.\ $|V(a,b)|>|V(b,a)|$. As mentioned in the introduction, Condorcet observed that the majority graph may contain a directed cycle. This has come to be known as the Condorcet paradox~\cite{Con85}. A \textit{tournament} is a majority graph that is complete, or, in other words, for every $a$ and $b$ either $|V(a,b)|>|V(b,a)|$ or $|V(a,b)|<|V(b,a)|$ holds. The majority graph of our profile in Figure~\ref{fi:ex} is depicted in Figure~\ref{fi:majority}. As the edges form no cycle, it is an acyclic majority graph, but since it is not a complete graph, it is not a tournament.

\begin{figure}[htb]
    \centering    
    \begin{tikzpicture}[scale=.5]
  \tikzstyle{every ellipse node}=[draw,inner xsep=3.5em,inner ysep=1em,fill=black!15!white,draw=black!15!white]
  \foreach \x / \y in {3/2,8/1,5.5/1a}
    \draw  (2,\x) node[ellipse] (ellipse\y){};
  \tikzstyle{every circle node}=[draw,minimum size=1.6em,inner sep=0pt,fill=white]
  \draw (0,3)		node[circle](7){$7$} ++(2,0) node[circle](8){$8$} ++(2,0) node[circle](9){$9$};
  \draw (0,5.5)		node[circle](4){$4$} ++(2,0) node[circle](5){$5$} ++(2,0) node[circle](6){$6$};
  \draw (0,8)		node[circle](1){$1$} ++(2,0) node[circle](2){$2$} ++(2,0) node[circle](3){$3$};

   \foreach \i/\j in {2/3,1/2,7/8,5/6} {\draw[-latex] (\i) -- (\j);}

   \draw[-latex] (1.45) to [out=30,in=150] (3.135);
   \draw[-latex] (ellipse1) to [out=185,in=180] (ellipse1a);
   \draw[-latex] (ellipse1) to [out=180,in=180] (ellipse2);
   \draw[-latex] (ellipse1a) to [out=0,in=0] (ellipse2);
\end{tikzpicture}
    \caption{The majority graph of the profile from Figure~\ref{fi:ex}. Each \dmrev{grey} set of \dmrev{three} candidates denotes a specific subgraph \dmrev{that satisfies the same property as a \emph{component} in a tournament}~\cite{Las97}. Arcs between these components symbolise the complete set of 9 arcs, between any two vertices from different components.}
    \label{fi:majority}
\end{figure}

Ranking $\pi$ is a \emph{topologically sorted ranking} of profile $P$ if 
$\rank_{\pi}[a] < \rank_{\pi}[b]$  holds for each pair of candidates $a$ and $b$ with $|V(a,b)| > |V(b,a)|$. Topologically sorted rankings correspond to the graph-theoretical topological sort of the vertices in the majority graph, and thus only exist if the majority graph is acyclic. Acyclic tournaments trivially have a unique topologically sorted ranking. A topologically sorted ranking for the profile in Figure~\ref{fi:ex} with $9$ candidates is $\sigma=[1,2,3],[4,5,6],[7,8,9]$, as can be checked easily.

The \emph{Kemeny rank} of a ranking $\pi$ for a given profile with voters $v_{1},\ldots,v_{n}$ is defined as $\sum_{i=1}^{n} K(\pi,\pi_{v_{i}})$. If ranking $\sigma$ has minimum Kemeny rank over all rankings, then $\sigma$ is a \textit{Kemeny consensus}~\cite{Kem59}. The following well-known observation~\cite{DK04} will be useful in our proofs. 
\begin{observation}\label{fact2}
     Each topologically sorted ranking is a Kemeny consensus. For acyclic majority graphs, the set of topologically sorted rankings coincides with the set of \dmrev{Kemeny consensuses}. 
\end{observation}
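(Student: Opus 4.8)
The plan is to exploit the fact that the Kemeny rank decomposes additively over the $\binom{m}{2}$ unordered candidate pairs, so that it can be minimised ``pair by pair''. For a ranking $\pi$ and an unordered pair $\{a,b\}$, let $c_\pi(\{a,b\})$ denote the number of voters who disagree with $\pi$ on the relative order of $a$ and $b$; unwinding the definition of the Kendall distance gives $\sum_{i=1}^{n} K(\pi,\pi_{v_i}) = \sum_{\{a,b\}} c_\pi(\{a,b\})$. First I would observe that if $\pi$ ranks $a$ before $b$ then $c_\pi(\{a,b\}) = |V(b,a)|$, while if $\pi$ ranks $b$ before $a$ then $c_\pi(\{a,b\}) = |V(a,b)|$. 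Consequently $c_\pi(\{a,b\}) \ge \min\{|V(a,b)|,|V(b,a)|\}$ for every ranking $\pi$, and hence every ranking has Kemeny rank at least $L := \sum_{\{a,b\}} \min\{|V(a,b)|,|V(b,a)|\}$.

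Next I would check that a topologically sorted ranking $\pi$ attains this bound. Indeed, for a pair $\{a,b\}$ with $|V(a,b)| > |V(b,a)|$ the definition of a topologically sorted ranking forces $a$ before $b$ in $\pi$, so $c_\pi(\{a,b\}) = |V(b,a)| = \min\{|V(a,b)|,|V(b,a)|\}$; and for a pair with $|V(a,b)| = |V(b,a)|$ the equality $c_\pi(\{a,b\}) = \min\{|V(a,b)|,|V(b,a)|\}$ holds no matter how $\pi$ orders the pair. Summing over all pairs shows the Kemeny rank of $\pi$ equals $L$, the global lower bound, so $\pi$ is a Kemeny consensus. This proves the first sentence (the claim being vacuous when the majority graph has a directed cycle and no topologically sorted ranking exists).

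For the second sentence, assume the majority graph is acyclic, so that a topologically sorted ranking exists; by the previous paragraph the minimum Kemeny rank over all rankings is exactly $L$. Let $\sigma$ be any Kemeny consensus. Then $\sum_{\{a,b\}} c_\sigma(\{a,b\}) = L = \sum_{\{a,b\}} \min\{|V(a,b)|,|V(b,a)|\}$, and since every summand on the left is at least the corresponding summand on the right, we must have $c_\sigma(\{a,b\}) = \min\{|V(a,b)|,|V(b,a)|\}$ for every pair. Now take any pair $\{a,b\}$ with $|V(a,b)| > |V(b,a)|$: we have $c_\sigma(\{a,b\}) = |V(b,a)|$ if $a$ precedes $b$ in $\sigma$ and $c_\sigma(\{a,b\}) = |V(a,b)|$ if $b$ precedes $a$, and since $|V(a,b)| \ne |V(b,a)|$ these two values are distinct; hence the established equality $c_\sigma(\{a,b\}) = |V(b,a)|$ forces $a$ to precede $b$ in $\sigma$. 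Thus $\sigma$ satisfies the defining condition of a topologically sorted ranking, and combined with the first sentence the two sets coincide.

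I do not expect a serious obstacle; the work is entirely bookkeeping. The one point needing care is the passage from ``the sum equals $L$'' to ``every term equals its minimum'', which is valid precisely because each term is individually bounded below by that minimum. One should also not forget the standard graph fact, already used in the paragraph preceding the observation, that an acyclic majority graph admits a topological sort, which is what guarantees that the bound $L$ is actually achieved and is therefore the true optimum rather than merely a lower bound.
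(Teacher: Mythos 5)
Your proof is correct and complete. Note that the paper itself gives no proof of this statement: it is stated as a well-known observation with a citation to [DK04], so there is nothing to compare against except the standard argument in the literature --- which is exactly the one you give. The pairwise decomposition $\sum_{i} K(\pi,\pi_{v_i}) = \sum_{\{a,b\}} c_\pi(\{a,b\})$ with $c_\pi(\{a,b\}) \in \{|V(a,b)|,|V(b,a)|\}$, the resulting lower bound $L$, its attainment by a topological sort, and the term-by-term rigidity argument for the converse direction under acyclicity are all handled properly, including the two edge cases that actually need care (tied pairs, and the vacuity of the first sentence when the majority graph is cyclic).
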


\subsection{\dmrev{Popularity and problem definitions}}
\label{se:preferences}

\paragraph{\dmrev{Preferences over rankings}} Voters prefer rankings that are similar to their submitted ranking. More precisely, voter $v$ \emph{prefers} ranking $\sigma$ to ranking $\pi$ if $K(\sigma,\pi_{v}) < K(\pi,\pi_{v})$. Analogously, voter $v$ \textit{abstains} 
between $\pi$ and $\sigma$ if $K(\sigma,\pi_{v})=K(\pi,\pi_{v})$. We will simply call $v$ an \textit{abstaining voter} if $\pi$ and $\sigma$ are clear from the context.

For example, let $\sigma_{1}=[1,2,3],[5,6,4],[9,7,8]$ and $\sigma_{2}=[1,2,3],\allowbreak [4,5,6],[7,8,9]$. Looking back at Figure~\ref{fi:ex}, clearly $v_{4}$ prefers $\sigma_2$ to $\sigma_1$, since $\pi_{v_{4}}=\sigma_{2}$ and $\pi_{v_{4}} \neq \sigma_{1}$, that is, $K(\pi_{v_{4}},\sigma_{2})=0<K(\pi_{v_{4}},\sigma_{1})$. Voter $v_{1}$ in the same profile is an abstaining voter since $K(\pi_{v_{1}},\sigma_{2})=4=K(\pi_{v_{1}},\sigma_{1})$. \dmrev{For more detail regarding these calculations, the reader is referred to \ref{app:preferences}.}

\paragraph{\dmrev{Majority concepts}}
\dmrev{We now formally define the two majority concepts we rely on in this paper. A set $V' \subseteq V$ of voters \dmrev{forms} an \emph{absolute majority} of the voters if $|V'| >n/2$.
For the more intricate majority concept, two rankings \dmrev{$\pi$ and $\sigma$ must be given}. Let $V_{\text{abs}}(\pi,\sigma)$ be the set of voters who abstain in the vote between rankings $\pi$ and $\sigma$, that is, $v \in V_{\text{abs}}(\pi,\sigma)$ if and only if $K(\pi_{v},\pi)=K(\pi_{v},\sigma)$. In Figure~\ref{fi:ex}, $V_{\text{abs}}(\sigma_1,\sigma_2)=\{v_{1},v_{2},v_{3} \}$, \dmrev{where $\sigma_1$ and $\sigma_2$ are as defined in the previous paragraph}. A set $V' \subseteq V$ of voters \dmrev{forms} a \emph{simple majority} of the voters if $|V'| > |V \setminus V_{\text{abs}}(\pi,\sigma) |/2$. A simple majority therefore always depends on the rankings that are \dmrev{being} compared.}

\paragraph{\dmrev{Popularity concepts}}
We now define the two different notions of popularity. The first notion of a weakly popular ranking corresponds to the popular ranking as defined in~\cite{VSW14}. Let $C$ be a set of candidates \dmrev{and let} $P$ \dmrev{be} a profile over $C$ of voters $V$.
     \begin{definition}
     \label{def:abs}
        Ranking $\pi'$ is \emph{more popular than ranking $\pi$ in the absolute sense} if $K(\pi', \pi_{v}) < K(\pi, \pi_{v})$ for an absolute majority of all voters $v\in V$. Ranking $\pi$ is \emph{weakly popular} for $P$ if no ranking $\pi'$ is more popular than $\pi$ in the absolute sense, in other words, if there is no ranking $\pi'$ such that $K(\pi', \pi_{v}) < K(\pi, \pi_{v})$ for an absolute majority of all voters in $v\in V$.
     \end{definition}
If we consider $\sigma_3=[2,1,3],[4,5,6],[7,8,9]$, then in the profile in Figure~\ref{fi:ex}, $\sigma_{2}=[1,2,3],[4,5,6],[7,8,9]$ is more popular than $\sigma_3$ in the absolute sense. Notice that $\sigma_3$ and $\sigma_2$ only differ in their ordering of the pair of candidates $\{1,2\}$. So since five out of six voters prefer candidate $1$ to candidate $2$, they form an absolute majority of all voters who prefer $\sigma_{2}$ to~$\sigma_3$.

This definition requires more than half of the $n$ voters to prefer $\pi'$ to $\pi$ in order to declare $\pi'$ to be more popular than~$\pi$. Abstaining voters make it hard to beat $\pi$ in such a pairwise comparison. However, if $\pi'$ only needs to receive more votes than $\pi$ among the voters not abstaining between these two rankings, then it can beat~$\pi$. This leads to the notion of strong popularity.
     \begin{definition}
        Ranking $\pi'$ is \emph{more popular than ranking $\pi$ in the simple sense} if $K(\pi', \pi_v) < K(\pi, \pi_v)$ for \dmrev{an absolute} majority of the non-abstaining voters $v\in V \setminus V_{\text{\emph{abs}}}(\pi,\pi')$. Ranking $\pi$ is \emph{strongly popular} for $P$ if no ranking $\pi'$ is more popular than $\pi$ in the simple sense, in other words, if there is no ranking $\pi'$ such that $K(\pi', \pi_{v}) < K(\pi, \pi_{v})$ for \dmrev{an absolute} majority of the non-abstaining voters $v\in V\setminus V_{\text{\emph{abs}}}(\pi,\pi')$.
     \end{definition}
It follows directly from the two definitions above that strongly popular rankings are weakly popular as well, but weakly popular rankings are not necessarily strongly popular. In the profile in Figure~\ref{fi:ex},  $\sigma_{1}=[1,2,3],[5,6,4],[9,7,8]$ is more popular than $\sigma_{2}=[1,2,3],[4,5,6],[7,8,9]$ in the simple sense, since $v_{5}$ and $v_{6}$ prefer $\sigma_{1}$ to $\sigma_{2}$, while $v_{1},v_{2}$, and $v_{3}$  abstain. Notice that $\sigma_{1}$ is not more popular than $\sigma_{2}$ in the absolute sense, because two voters do not constitute an absolute majority of all six voters, only \dmrev{an absolute} majority of the non-abstaining three voters. \dmrev{Again, we provide more explanation for these calculations in \ref{app:preferences}.}

\paragraph{\dmrev{Problem definitions}}
We now define \dmrev{two} natural verification problems arising from the notions of weakly and strongly popular rankings\dmrev{.}

\begin{tcolorbox}
\textbf{$k$-\fauxsc{weakly-unpopular-ranking-verify} ($k$-\fauxsc{wurv})}\\
\textbf{Input}: A set $C$, a profile $P$ over $C$ of size $k$ and a ranking $\pi$ over $C$.\\
\textbf{Question}: Does there exist a ranking $\sigma$ that is preferred to $\pi$ by \dmrev{an absolute} majority of all voters?
\end{tcolorbox}
\begin{tcolorbox}
\dmrev{
\textbf{$k$-\fauxsc{strongly-unpopular-ranking-verify} ($k$-\fauxsc{surv})}\\
\textbf{Input}: A set $C$, a profile $P$ over $C$ of size $k$ and a ranking $\pi$ over $C$.\\
\textbf{Question}: Does there exist a ranking $\sigma$ that is preferred to $\pi$ by a simple majority of all voters?
}
\end{tcolorbox}
\dmrev{In other words, $k$-\fauxsc{wurv} (respectively $k$-\fauxsc{surv}) asks whether $\pi$ is not weakly popular (respectively not strongly popular).}
\medskip

\dmrev{We derive the voters' preferences over rankings from their preferences over candidates using the notion of Kendall distance. Providing the voters’ preferences over rankings explicitly \dmrev{as part of the input, as a list of all rankings, would increase the number of ranking list entries from $nm$ to $n\cdot m!$}. Our problems $k$-\fauxsc{wurv} and $k$-\fauxsc{surv} can be solved by iterating through \dmrev{every possible ranking $\sigma$ and comparing $\sigma$} to the given ranking~$\pi$. Therefore, \dmrev{with an input model that involves explicit preferences over rankings}, the problems addressed in the paper would be trivially solvable in polynomial time \dmrev{relative} to the significantly increased input size.}

\section{\dmrev{Relationships involving} weakly and strongly popular rankings}
\label{strandweak}
In this section, we study \dmrev{connections between weakly and strongly popular rankings, and between other concepts involving rankings}. We first place weakly and strongly popular rankings in the context of Kemeny consensuses in Section~\ref{se:revisit}. Then in Section~\ref{se:6voters} we show that for as few as six voters, the two notions of popularity \dmrev{are not} equivalent and that at the heart of this lies Condorcet's paradox. Building upon our \dmrev{six}-voter
example \dmrev{from Figure \ref{fi:ex}}, in Section~\ref{se:condition} we derive a sufficient, but not necessary condition for \dmrev{a} weakly popular ranking to be strongly popular as well. This condition opens a way to prove in Section~\ref{se:5voters} that for up to \dmrev{five} voters the two notions are equivalent.

\subsection{\dmrev{Relationships with other properties}}
\label{se:revisit}

We first revisit two results from~\cite{VSW14}, established for weak popularity, and translate them for the notion of strong popularity \dmrev{in Lemma~\ref{condorcetweak}}.

\begin{lemma}\label{condorcetweak}
If a profile $P$ has a majority graph with a directed cycle, then there does not exist a strongly popular ranking. If $P$ has an acyclic majority graph, then a topologically sorted 
ranking is not necessarily strongly popular.
\end{lemma}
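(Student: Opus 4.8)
The lemma has two parts, and I would attack them separately. For the first part, I would show the contrapositive-flavoured claim directly: assume the majority graph of $\mathcal{I}$ contains a directed cycle, and let $\pi$ be any ranking; I must produce a ranking $\pi'$ that is more popular than $\pi$ in the simple sense. The natural idea is to walk around the cycle. If $a_1 \to a_2 \to \cdots \to a_k \to a_1$ is a directed cycle in the majority graph, then for each arc $a_i \to a_{i+1}$ a strict majority of voters prefer $a_i$ to $a_{i+1}$. Since $\pi$ is a linear order, the restriction of $\pi$ to $\{a_1,\dots,a_k\}$ must disagree with at least one of these arcs, say it ranks $a_{i+1}$ before $a_i$ though the majority wants $a_i$ first. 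Here is where I would use the bubble-swap machinery from the preliminaries: pick a pair of candidates $a_i, a_{i+1}$ that are adjacent in the majority cycle but "inverted" in $\pi$, and moreover choose them to be \emph{consecutive in $\pi$} (one can always find such a pair by taking, among all majority-arcs violated by $\pi$, one whose two endpoints are closest in $\pi$, then arguing the candidate strictly between them would create a shorter violation or a different violated arc — the cleanest route is actually to observe that along the bubble-swap path toward any topologically-unsorted target, some single swap $(b,a)\to(a,b)$ with $|V(a,b)|>|V(b,a)|$ occurs). Performing that one swap changes $\pi$ into $\pi'$ differing only in the order of one adjacent pair $\{a,b\}$; then $V_\text{abs}(\pi,\pi')$ is exactly the set of voters agreeing... wait, more precisely, a voter's Kendall distance to $\pi$ and to $\pi'$ differs by exactly one, decreasing for voters in $V(a,b)$ and increasing for voters in $V(b,a)$, so \emph{no voter abstains}, and $\pi'$ is preferred by the strict majority $V(a,b)$. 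Hence $\pi'$ beats $\pi$ in the simple sense (indeed in the absolute sense too), so $\pi$ is not simply popular; since $\pi$ was arbitrary, no simply popular ranking exists.

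The subtle point in the first part is justifying that one can always reach, via a single swap of an \emph{adjacent} pair, an inversion of a majority-arc. I would argue as follows: since the majority graph has a cycle, it is not acyclic, so no topologically sorted ranking exists, so $\pi$ is not topologically sorted, meaning there exist candidates $a,b$ with $|V(a,b)|>|V(b,a)|$ yet $\rank_\pi[b]<\rank_\pi[a]$. Among all such violating pairs choose one with $\rank_\pi[a]-\rank_\pi[b]$ minimum. If this difference is $1$, $a$ and $b$ are adjacent in $\pi$ and we swap them. If it is larger, let $c$ be the candidate immediately after $b$ in $\pi$ (so $c\neq a$, and $c$ lies strictly between $b$ and $a$). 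Then either $|V(b,c)|\ge|V(c,b)|$ is false — i.e. $|V(c,b)|>|V(b,c)|$, giving a violating pair $(c,b)$ at distance $1<$ current, contradiction — or $|V(b,c)|\ge|V(c,b)|$, in which case I relocate the argument to the pair $(a,c)$: if $|V(a,c)|>|V(c,a)|$ then $(a,c)$ is violating at smaller distance, contradiction; otherwise $|V(c,a)|\ge|V(a,c)|$, and combined with $|V(b,c)|\ge|V(c,b)|$ and $|V(a,b)|>|V(b,a)|$ this is fine — hmm, this doesn't immediately close. The clean fix is to not insist the final swapped pair be the original $(a,b)$: I only need \emph{some} adjacent pair in $\pi$ that is a strict majority inversion, and a short counting/extremal argument along the block of candidates between $b$ and $a$ in $\pi$ produces one (if every adjacent pair inside that block were majority-consistent with $\pi$'s order or tied, one can telescope to contradict $|V(a,b)|>|V(b,a)|$ when $m$... actually ties block telescoping, so the honest statement is: there is an adjacent pair $\{x,y\}$ in $\pi$ with $|V(y,x)|>|V(x,y)|$ where $\pi$ orders $x$ before $y$). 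I expect this extremal argument to be the main obstacle and would write it carefully; it is essentially the standard fact that a non-topologically-sorted linear order has an adjacent "locally bad" transposition.

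For the second part of the lemma I would simply point to the instance in Figure~\ref{fi:ex}, which the text has already set up. The majority graph there is acyclic (as noted after Figure~\ref{fi:majority}), and $\sigma=[1,2,3],[4,5,6],[7,8,9]$ is a topologically sorted ranking. But the text has already exhibited $\sigma_1=[1,2,3],[5,6,4],[9,7,8]$ as being more popular than $\sigma_2=[1,2,3],[4,5,6],[7,8,9]=\sigma$ in the simple sense (voters $v_5,v_6$ prefer $\sigma_1$, voters $v_1,v_2,v_3$ abstain, and one checks $v_4$ prefers $\sigma_2$ and $v_5,v_6$... so $2$ of the $3$ non-abstaining voters prefer $\sigma_1$). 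Therefore this topologically sorted ranking $\sigma$ is not simply popular, which proves the second assertion. I would add the one-line verification of the relevant Kendall distances ($K(\pi_{v_i},\sigma_1)$ versus $K(\pi_{v_i},\sigma_2)$ for $i=1,\dots,6$) to make the example self-contained, and note this mirrors the analogous statement for absolute popularity from~\cite{VSW14}. $\qed$
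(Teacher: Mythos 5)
There is a genuine gap in your proof of the first statement. Your plan hinges on the claim that any ranking $\pi$ in an instance with a cyclic majority graph admits an \emph{adjacent} pair $(x,y)$ (with $x$ immediately before $y$ in $\pi$) such that a strict majority prefers $y$ to $x$; a single swap of that pair would then beat $\pi$ with no abstainers. This claim is false, and the standard Condorcet $3$-cycle already refutes it: with voters $1,2,3$ / $2,3,1$ / $3,1,2$ and $\pi = 1,2,3$, the majority arcs are $1\to 2$, $2\to 3$, $3\to 1$; the adjacent pairs $(1,2)$ and $(2,3)$ of $\pi$ both agree with the majority, and the only violated arc, $3 \to 1$, joins non-adjacent candidates. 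No single adjacent transposition of $\pi$ is preferred by a majority here (the ranking that beats $\pi$ is the full reversal $3,2,1$, preferred by $v_2$ and $v_3$). You sensed this yourself when your extremal argument failed to close -- the telescoping is blocked precisely because majority preference is not transitive, which is the whole content of Condorcet's paradox -- but you then asserted the needed ``honest statement'' anyway, and it does not hold. The fact you are reaching for (``a non-topologically-sorted linear order has an adjacent locally bad transposition'') is exactly what fails for cyclic majority graphs; it is the reason locally Kemeny-optimal rankings exist even when topologically sorted ones do not. The paper sidesteps all of this: it cites the corresponding statements for \emph{absolute} popularity from~\cite{VSW14} (Lemmas 2 and 3 there) and observes that both transfer immediately because every simply popular ranking is absolutely popular by definition. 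If you want a self-contained argument you must either reprove the absolute-popularity statement of~\cite{VSW14} with a genuinely global construction (a single adjacent swap will not do), or take the paper's one-line route.

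Your treatment of the second statement is fine and essentially matches what the paper relies on: the instance of Figure~\ref{fi:ex} has an acyclic majority graph, $\sigma_2=[1,2,3],[4,5,6],[7,8,9]$ is topologically sorted, and $\sigma_1=[1,2,3],[5,6,4],[9,7,8]$ is more popular than $\sigma_2$ in the simple sense ($v_5,v_6$ prefer $\sigma_1$, $v_4$ prefers $\sigma_2$, $v_1,v_2,v_3$ abstain), so $\sigma_2$ is not simply popular. The paper instead cites the (stronger) absolute-popularity counterexample of~\cite{VSW14} and again uses the implication from simple to absolute popularity; either justification is acceptable, provided the Kendall-distance computations for Figure~\ref{fi:ex} are actually carried out as you propose.
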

\begin{proof}
These two statements hold for weak popularity~\cite[Lemmas~2 and~3]{VSW14} and hence also for strong popularity, because strongly popular rankings are also weakly popular by definition.
\qed\end{proof}



Each weakly popular ranking must be topologically sorted, as shown by the proof of~\cite[Lemma 2]{VSW14}. In short, if the majority graph has a directed cycle, for an arbitrary ranking $\succ$ of the candidates, there will be two candidates $a,b\in C$ such that $a\succ b$ but \dmrev{an absolute} majority of voters prefers $b$ to $a$. One can show that the ranking $\succ'$ obtained by swapping $a$ and $b$ in $\succ$ is preferred to $\succ$ by every voter $v$ satisfying $b\succ_v a$, which is \dmrev{an absolute} majority of all voters.
This result together with Lemma~\ref{condorcetweak} and Observation~\ref{fact2} lead\dmrev{s} to the following set inclusion relationships \dmrev{involving} weakly popular, strongly popular \dmrev{and} topologically sorted \dmrev{rankings}, and \dmrev{Kemeny consensuses}.

\begin{observation}\label{obs:topsort}
Strongly popular rankings form a subset of weakly popular rankings, weakly popular rankings form a subset of topologically sorted rankings, and finally, topologically sorted rankings form a subset of \dmrev{Kemeny consensuses}. In profiles with an acyclic majority graph, topologically sorted rankings coincide with \dmrev{Kemeny consensuses}.
\end{observation}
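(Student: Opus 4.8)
The plan is to obtain this observation as a routine assembly of three facts already recorded in the excerpt; I expect no genuinely new argument to be required, only careful bookkeeping. Concretely, the three claimed inclusions decompose as: (i) simply popular $\subseteq$ absolutely popular, which is the remark made immediately after the definition of simple popularity; (ii) absolutely popular $\subseteq$ topologically sorted, which is the sentence ``Each absolutely popular ranking must be topologically sorted'' together with \cite[Lemma 2]{VSW14}; and (iii) topologically sorted $\subseteq$ Kemeny consensus, together with the equality in the acyclic case, both of which are literally the content of Observation~\ref{fact2}. So the write-up is essentially a chain of ``by definition / by Lemma / by Observation''.

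For (i) I would argue by contrapositive directly from the two definitions. If $\pi$ is not absolutely popular, fix a witness $\pi'$ with $K(\pi',\pi_v)<K(\pi,\pi_v)$ for an absolute majority of $V$, say for $k$ voters with $k>n/2$. Every such voter strictly prefers $\pi'$ and is therefore not in $V_\text{abs}(\pi,\pi')$, so all $k$ of them still lie in $V\setminus V_\text{abs}(\pi,\pi')$, whose size is $n-|V_\text{abs}(\pi,\pi')|\le n$. Hence $k>n/2\ge (n-|V_\text{abs}(\pi,\pi')|)/2$, i.e.\ $\pi'$ is preferred to $\pi$ by a majority of the non-abstaining voters, so $\pi$ is not simply popular. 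For (ii) I would simply cite \cite[Lemma 2]{VSW14}: if $\pi$ is not topologically sorted there is a pair $\{a,b\}$ with $|V(a,b)|>|V(b,a)|$ yet $b$ before $a$ in $\pi$, and relocating this pair yields a ranking preferred to $\pi$ by the $|V(a,b)|>n/2$ voters of $V(a,b)$, contradicting absolute popularity; I would not reprove this. For (iii), and for the final ``coincide'' clause, I would quote Observation~\ref{fact2} verbatim, since it already states both that every topologically sorted ranking is a Kemeny consensus and that the two sets agree when the majority graph is acyclic. Transitivity of $\subseteq$ then gives the full nested chain.

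The only point deserving a moment's attention — hence the (mild) obstacle — is step (i): one must be sure that restricting the electorate from $V$ to $V\setminus V_\text{abs}(\pi,\pi')$ cannot destroy the majority. This is exactly the observation above that the count $k$ of voters strictly preferring $\pi'$ is unaffected by deleting abstainers while the denominator only shrinks, so a strict majority over $n$ remains a strict majority over $n-|V_\text{abs}(\pi,\pi')|$. Everything else is purely a matter of citing prior statements in the right order, so I would keep the proof to a few lines.
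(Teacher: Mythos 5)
Your proposal is correct and follows essentially the same route as the paper, which likewise assembles the observation from the remark after the definition of simple popularity, the proof of Lemma~2 of van Zuylen et al., and Observation~\ref{fact2}; your explicit check in step (i) that a strict majority of all voters survives the passage to the non-abstaining electorate is a welcome (and correct) elaboration of what the paper dismisses as following ``directly from the two definitions.''
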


Figure~\ref{fi:hierarchy} illustrates these relations. In profiles with a cyclic majority graph, \dmrev{Kemeny consensuses} offer a preference aggregation method by relaxing the definition of topologically sorted rankings. Weakly and strongly popular rankings do exactly the opposite: they restrict the set of topologically sorted rankings in profiles with an acyclic majority graph, in order to serve the welfare of the majority to an even larger degree than topologically sorted rankings do. \dmrev{Weak} and strong popularity are desirable robustness properties of a ranking.  However, the set of weakly popular rankings may be empty even if a topologically sorted ranking exists~\cite{VSW14}.

\dmrev{This is reminiscent \dmrev{of} single-winner elections in which being a Condorcet winner is a strong property. 
\dmrev{However, due to} Condorcet's paradox\dmrev{,} such a winner may not exist. 
\dmrev{The Condorcet Criterion, an axiom for voting rules, thus states that the winner of an election should be a Condorcet winner \emph{if one exists}.}
Most single-winner voting rules, such as Kemeny-Young, Black, Copeland, Dodgson's method, Minimax, Nanson's method, Ranked pairs, Schulze, Smith/IRV, Smith/minimax \dmrev{and} CPO-STV satisfy the Condorcet Criterion\dmrev{---for more details on those methods, we refer the reader to~\cite{BCE+16}}. \dmrev{Analogously, we} envision a ``Popularity criterion" for rank aggregation rules, such that a strongly popular ranking should always be chosen if one exists, and failing that, a weakly popular ranking if it exists.}

\begin{figure}[th]
\centering
  \includegraphics[width=0.8\textwidth]{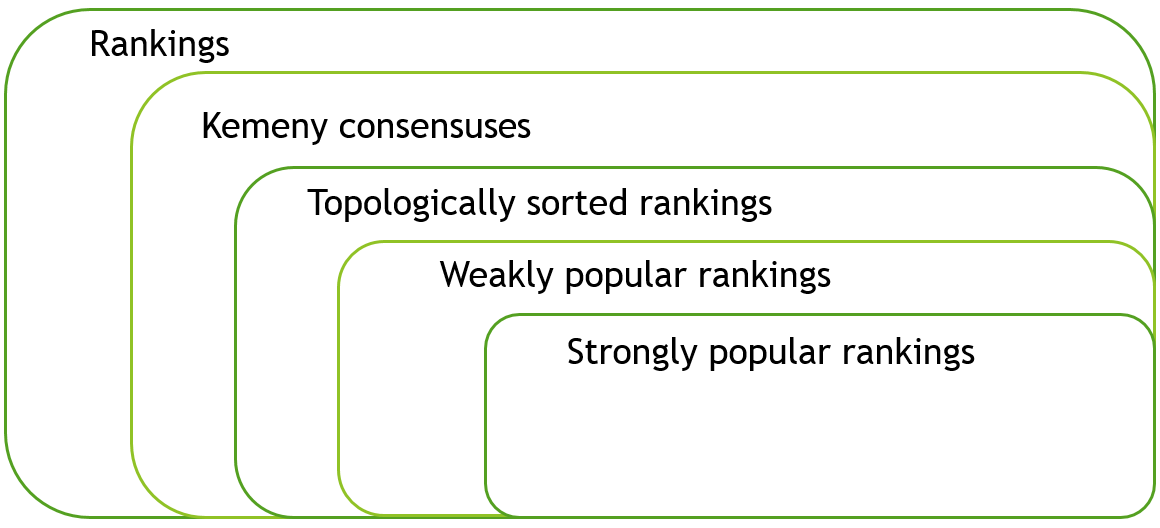}
\caption{The hierarchy of various optimality notions for rankings as a Venn diagram. A \dmrev{Kemeny consensus} always exists, but its subset of topologically sorted rankings might be empty.}
\label{fi:hierarchy}
\end{figure}

\subsection{Difference for \dmrev{six} voters}
\label{se:6voters}

Before presenting our example with $n=6$, we present a useful technical lemma. Let $(C_{1},\ldots, C_{k})$ be an \textit{ordered} partition of $C$ into $k$ sets. We say that a ranking $\pi$ \textit{preserves} $(C_{1},\ldots, C_{k})$ if it holds that rank$_{\pi}[a]<$rank$_{\pi}[b]$ whenever $a \in C_{i}$ and $b \in C_{j}$ for some~$i<j$.

\begin{lemma}\label{pres}
Let $(C_1,\dots,C_k)$ be an ordered partition of $C$ such that, for each $v\in V$, $\pi_v$ preserves $(C_1,\dots,C_k)$.  Then for any ranking $\sigma$, there exists a ranking $\zeta$ such that, for each $v\in V, \zeta$ preserves $(C_1,\dots,C_k)$ and $v$ prefers $\zeta$ to $\sigma$ or abstains in the vote between them.
\end{lemma}

\begin{proof}
Let $\tau_{i}$ be a ranking of the candidates in $C_{i}$. We denote by $\tau_{1}\tau_{2}\ldots\tau_{\dmrev{k}}$ the ranking of the candidates $\cup_{i=1}^{\dmrev{k}} C_{i} = C$, in which each candidate in $C_{i}$ is preferred to each candidate in $C_{j}$ whenever $i<j$, and candidates within a set $C_{i}$ are ranked according to~$\tau_{i}$.
Let $\zeta_{i}$ be the ranking on $C_{i}$ that orders candidates in $C_{i}$ according to their rank in $\sigma$, that is, $\rank_{\zeta_{i}}[a]<\rank_{\zeta_{i}}[b]$ if $\rank_{\sigma}[a]<\rank_{\sigma}[b]$. 
Let $\zeta := \zeta_{1}\zeta_{2}\ldots\zeta_{k}$. So $\zeta$ preserves $(C_{1},\ldots, C_{k})$. Consider voter $v \in V$. If $v$ prefers $a$ to~$b$, and $\zeta$ orders $b$ before~$a$, then since by assumption $\pi_v$ preserves $(C_{1},\ldots, C_{k})$, it follows that $a,b \in C_{i}$ for some $1\leq i \leq m$. Since the relative order of candidates in $C_{i}$ is the same in $\zeta$ and $\sigma$ by construction, $\sigma$ also orders $b$ before~$a$. In particular, every pair of candidates that contributes 1 to $K(\zeta,\pi_{v})$ also contributes 1 to $K(\sigma,\pi_{v})$. We conclude for any $v\in V$, it holds that $K(\zeta,\pi_{v})\leq K(\sigma,\pi_{v})$, as desired.
\qed\end{proof}
\begin{theorem}\label{example}
There exists a profile with six voters that has a weakly popular ranking which is not strongly popular.
\end{theorem}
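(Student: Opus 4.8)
The natural approach is to build an explicit six-voter instance and exhibit a concrete witness. The instance in Figure~\ref{fi:ex} is clearly intended to be exactly this example: it has six voters, nine candidates, and the candidates split into three ``components'' $\{1,2,3\}$, $\{4,5,6\}$, $\{7,8,9\}$, with the majority graph placing all of $\{1,2,3\}$ above all of $\{4,5,6\}$ above all of $\{7,8,9\}$ (Figure~\ref{fi:majority}), while within each triple the six voters' preferences form a Condorcet cycle. The plan is: (i) fix the candidate ranking $\pi = \sigma = [1,2,3],[4,5,6],[7,8,9]$ (the topologically sorted ranking exhibited after Figure~\ref{fi:majority}) and show it is \emph{absolutely} popular; (ii) exhibit $\sigma_1 = [1,2,3],[5,6,4],[9,7,8]$ and show it is \emph{simply} more popular than $\pi$, so $\pi$ is not simply popular.

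\textbf{Step (ii) first, since it is short.} Compare $\pi$ with $\sigma_1$: the two rankings agree completely on the block ordering and on the triple $\{1,2,3\}$, and differ only inside $\{4,5,6\}$ (where $\sigma_1$ uses the order $5,6,4$) and inside $\{7,8,9\}$ (where $\sigma_1$ uses $9,7,8$). A direct Kendall-distance computation against each $\pi_{v_i}$ shows that $v_1,v_2,v_3$ abstain between $\pi$ and $\sigma_1$ (each differs from $\pi$ by a $120^\circ$ rotation of the cyclic pattern, and from $\sigma_1$ by the same amount — this is exactly the indifference already noted in the text, where $v_1$ abstains between $\sigma_1$ and $\sigma_2=\pi$), while $v_5$ and $v_6$ strictly prefer $\sigma_1$ to $\pi$, and $v_4$ strictly prefers $\pi$. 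Thus among the non-abstaining voters $\{v_4,v_5,v_6\}$ a strict majority prefers $\sigma_1$, so $\sigma_1$ is more popular than $\pi$ in the simple sense, and $\pi$ is not simply popular. (All of these particular distance facts are already asserted in the discussion following the definition of simple popularity, so this step is essentially bookkeeping.)

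\textbf{Step (i), the main obstacle.} Showing $\pi$ is absolutely popular requires ruling out \emph{every} competing ranking $\sigma'$ as an absolute-majority winner, not just a finite list. Here Lemma~\ref{pres} does the heavy lifting: since $\pi$ preserves the partition $\{\{1,2,3\},\{4,5,6\},\{7,8,9\}\}$, if some $\sigma'$ were preferred by an absolute majority $V'$ ($|V'|\ge 4$), then some partition-preserving $\zeta = \zeta_1\zeta_2\zeta_3$ would also be preferred by all of $V'$. So it suffices to check the $6^3 = 216$ (really $(3!)^3 = 216$) partition-preserving rankings — and in fact it decomposes: $K(\zeta,\pi_v) = K(\zeta_1,\tau^{(1)}_v) + K(\zeta_2,\tau^{(2)}_v) + K(\zeta_3,\tau^{(3)}_v)$ where $\tau^{(j)}_v$ is $v$'s restriction to block $j$, and similarly for $\pi$. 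Because the within-block preference patterns are permuted symmetrically across the three blocks and across the voters, one reduces to analyzing a single block: a $3$-candidate Condorcet instance on voters $v_1,\dots,v_6$, and asking which voters a given ordering of that triple ``helps'' relative to the topologically sorted order $1,2,3$ (resp.\ $4,5,6$, $7,8,9$). The key claim to establish is that no choice of $(\zeta_1,\zeta_2,\zeta_3)$ can get $\ge 4$ voters to strictly decrease their total distance — intuitively, improving one block for some voters worsens it for others in a balanced way, and the three blocks cannot be jointly tuned to concentrate the gains on a single set of four voters. I would prove this by a short case analysis on the per-block ``improvement vectors'' (for each of the six non-identity orderings of a triple, list which of $v_1,\dots,v_6$ it helps / hurts / leaves indifferent, using the rotational structure), and then argue that no sum of three such vectors (one per block) is strictly positive in four coordinates. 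This combinatorial case-check over the block orderings is the real content of the proof; everything else is setup.
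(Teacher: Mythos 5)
Your overall route coincides with the paper's: same instance (Figure~\ref{fi:ex}), same witness $\sigma_1=[1,2,3],[5,6,4],[9,7,8]$ for the failure of simple popularity, and the same use of Lemma~\ref{pres} to cut the space of competitors down to the $6^3$ partition-preserving rankings. Step~(ii) is fine and matches the paper (the paper relegates the arithmetic to Appendix~\ref{app:necessary}). The divergence is in how the $216$ candidates are disposed of: the paper simply runs a program over all of them, whereas you propose a hand argument. That hand argument, as sketched, has a genuine gap.

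The gap is twofold. First, your reduction ``to analyzing a single block'' rests on a symmetry the instance does not have. Restricting the six voters to the three blocks (and relabelling each block as $1,2,3$) gives the multisets $\{123,231,312,123,123,123\}$ for block~1, $\{312,123,231,123,213,231\}$ for block~2, and $\{231,312,123,123,312,132\}$ for block~3. Only $v_1,v_2,v_3$ follow the cyclic pattern across blocks; $v_5$ and $v_6$ contribute transpositions ($546$ in block~2, $798$ in block~3) that make the three blocks genuinely different. So you cannot analyse one block and transport the conclusion to the others: you need three separate tables of per-block effects, and then the joint argument over all $6^3$ combinations (or at least a nontrivial pruning of them). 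Second, recording only which voters each block ordering ``helps / hurts / leaves indifferent'' is not enough, because a voter's overall comparison is the \emph{sum} of signed integer changes across the three blocks, and these changes range over $\{-3,\dots,3\}$; a voter hurt by $1$ in one block and helped by $2$ in another strictly prefers the competitor. You must carry the magnitudes through the case analysis. Neither of these is fatal --- the claim is true and a (longer) hand verification along your lines should go through --- but the ``real content'' you correctly identify is exactly the part you have not supplied, and the shortcut you propose for it does not work as stated. The paper avoids the issue entirely by certifying the $216$ comparisons computationally.
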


\begin{proof}
We prove this statement for the profile $P$ from Figure~\ref{fi:ex}.
\begin{myclaim}\label{cl1}
$\sigma_2 = [1,2,3],[4,5,6],[7,8,9]$ is weakly popular.
\end{myclaim}
\begin{proof}
Let $C_{1}=\{1,2,3\},C_{2}=\{4,5,6\},C_{3}=\{7,8,9\}$. Note that all voters preserve $(C_{1},C_{2},C_{3})$. By Lemma~\ref{pres}, in order to check if $\sigma_2$ is weakly popular it suffices to generate the $6^{3}$ rankings that preserve $(C_{1},C_{2},C_{3})$, and compare each of them to~$\sigma_2$. We checked all of these rankings using program code, which is available \dmrev{from} \url{https://github.com/SonjaKrai/PopularRankingsExampleCheck}.
\myqed
\end{proof}

\begin{myclaim}\label{cl2}
$\sigma_2 = [1,2,3],[4,5,6],[7,8,9]$ is not strongly popular.
\end{myclaim}

\begin{proof}
Ranking $\sigma_{1}=[1,2,3],[5,6,4],\allowbreak[9,7,8]$ is more popular than $\sigma_2$ in the simple sense, because voters $v_1, v_2, v_3$ abstain, voter $v_4$ prefers $\sigma_2$ to~$\sigma_1$, and finally, voters $v_5$ and $v_6$ prefer $\sigma_1$ to~$\sigma_2$. \ref{app:preferences} contains the detailed calculations \dmrev{that justify the preferences of each voter between $\sigma_1$ and~$\sigma_2$.}
\myqed\end{proof}
This finishes the proof of our theorem.\qed\end{proof}

\dmrev{We remark that Theorem \ref{example} can be easily extended to $k>6$ voters, for even values of $k$, by adding $k-6$ voters to the profile $P$ from Figure~\ref{fi:ex}, where half of the new voters have $\sigma_1$ as their ranking, and the other half of the new voters have $\sigma_2$ as their ranking.}

For the sake of completeness, we \dmrev{also} remark that the profile $P$ from Figure~\ref{fi:ex} has four further weakly popular rankings, each of which is strongly popular as well. For a profile that admits a weakly popular ranking but does not admit a strongly popular ranking, \dmrev{we refer to reader to} \ref{app:nosr}.

\subsection{\dmrev{When are weakly and strongly popular rankings equivalent?}}
\label{se:condition}

We now present a sufficient condition under which strong popularity follows from weak popularity. 
We start by showing that for a ranking $\pi$ that is not strongly popular, there is a condition under which we can compute a ranking that is also more popular than $\pi$ in the absolute sense.
    \begin{lemma}\label{condeq}
     If $\sigma_1$ is more popular than $\pi$ in the simple sense and the majority graph of the voters in $V_{\text{abs}}(\sigma_1, \pi)$ is acyclic, then there is a ranking $\sigma_2$ that is more popular than $\pi$ in the absolute sense. Such a $\sigma_2$ can be computed in polynomial time. 
    \end{lemma}
     \begin{proof}
 If $\pi$ is not a topologically sorted ranking then $\pi$ cannot be weakly popular by Observation~\ref{obs:topsort}.  
 As in the proof of Lemma~2 in \cite{VSW14}, we may construct a ranking $\sigma_2$ that is more popular than $\pi$ in the absolute sense. 
 In particular, there must be $a,b$ such that $rank_{\pi}[b]<rank_{\pi}[a]$ and \dmrev{an absolute} majority of voters in $V$ prefer $a$ to $b$. Let $\sigma_2$ be the ranking obtained by swapping $a$ and $b$. The proof of Lemma~2 in ~\cite{VSW14} shows that $\sigma_2$ is preferred to $\pi$ by \dmrev{an absolute} majority of voters who prefer $a$ to $b$. This shows that $\sigma_2$ is more popular in the absolute sense.
 Hence suppose that $\pi$ is a topologically sorted ranking.         
    Furthermore, if $V_{\text{abs}}(\sigma_1, \pi)=\emptyset$, then $\sigma_1$ is preferred to $\pi$ by \dmrev{an absolute} majority of all voters, and thus the theorem is proved by taking $\sigma_2 := \sigma_1$.
     
     From here on we therefore assume that $\pi$ is topologically sorted and that $V_{\text{abs}}(\sigma_1, \pi) \neq \emptyset$. Let $D(\sigma_1, \pi)\neq \emptyset$ be the set of pairs of candidates that are ordered differently in $\sigma_1$ and $\pi$.
     
     \begin{myclaim}
     \label{cl:Ceven}
        If $v \in V_{\text{abs}}(\sigma_1, \pi)$, then $v$ agrees with $\pi$ on the order of exactly half of the pairs in $D(\sigma_1, \pi)$, and disagrees on the other half. In particular, $|D(\sigma_1, \pi)|$ is even.
     \end{myclaim}
     
     \begin{proof}
     Any pair of candidates $\{a,b\}$ such that $\sigma_1$ and $\pi$ agree on the order $(a,b)$, adds $1$ to both $K(\pi_{v},\sigma_1)$ and $K(\pi_{v},\pi)$ if $\pi_{v}$ has them in the relative order $(b,a)$, and adds 0 to both otherwise. So to be impartial, i.e.\ to have $K(\pi_v,\sigma_1)=K(\pi_v,\pi)$, voter $v \in V_{\text{abs}}(\sigma_1, \pi)$ must agree with $\sigma_1$ on the order of exactly half of the pairs in $D(\sigma_1, \pi)$, and disagree on the other half\dmrev{. So $|D(\sigma_1, \pi)|$ must be even.}
    \myqed\end{proof}

     \begin{myclaim}
     \label{cl:ab}
     There exist consecutive candidates $(a^*,b^*)$ in $\sigma_1$ such that at least half of the voters in $V_{\text{abs}}(\sigma_1, \pi)$ prefer $b^*$ to~$a^*$.
     \end{myclaim}
     
     \begin{proof}
     Assume the contrary, i.e.\ that for any two consecutive candidates in $\sigma_1$, \dmrev{an absolute} majority of voters in $V_{\text{abs}}(\sigma_1, \pi)$ agree with~$\sigma_1$. For an arbitrary pair of candidates $\{a,b\}$, at least half of the voters in $V_{\text{abs}}(\sigma_1, \pi)$ must then also agree with their order in~$\sigma_1$, as otherwise this would imply a directed cycle in their majority graph, which is acyclic by assumption. 
     
     Since $\sigma_1\neq\pi$, there is some ordered pair $(a^*,b^*)$ that is consecutive in $\sigma_1$ and $b^*$ is ordered somewhere before $a^*$ in~$\pi$, that is $\rank_{\pi}[b^*]<\rank_{\pi}[a^*]$. Note that $(a^*,b^*) \in D(\sigma_1, \pi)$. In particular, by our starting assumption, \dmrev{an absolute} majority of voters in $V_{\text{abs}}(\sigma_1, \pi)$ agrees with $\sigma_1$ on the order $(a^*,b^*)$ and hence disagrees with~$\pi$.
     
     We now introduce the indicator variable $I_{v,\pi,\{a,b\}}$, which is  set to 1 if $\pi_{v}$ and $\pi$ disagree on the order of candidates $a$ and $b$, and it is set to 0 otherwise. We sum up the disagreements of voters in $V_{\text{abs}}(\sigma_1, \pi)$ with $\pi$ over pairs in $D(\sigma_1, \pi)$ and obtain a contradiction.
     
    \begin{eqnarray}
	\label{eq:2} |V_{\text{abs}}(\sigma_1, \pi)|\frac{|D(\sigma_1, \pi)|}{2}  & = & \sum_{\{a,b\} \in D(\sigma_1, \pi)} \sum_{v \in V_{\text{abs}}(\sigma_1, \pi)}  I_{v,\pi,\{a,b\}} \\ 
	\label{eq:3}  & > & \sum_{\{a,b\} \in D(\sigma_1, \pi)} \frac{|V_{\text{abs}}(\sigma_1, \pi)|}{2} \\
    \label{eq:3a} & = & |V_{\text{abs}}(\sigma_1, \pi)|\frac{|D(\sigma_1, \pi)|}{2} 
    \end{eqnarray}	
The right-hand side of Line~\ref{eq:2} is a formulation of disagreements in terms of the indicator variable. Due to Claim~\ref{cl:Ceven}, the number of disagreements that abstaining voters have with $\pi$ is exactly half of $|D(\sigma_1, \pi)|$, expressed on the left-hand side of Line~\ref{eq:2}. The inequality in Line~\ref{eq:3} follows because for all pairs in $D(\sigma_1, \pi)$, at least half of the abstaining voters disagree with $\pi$, and, additionally, there exists a pair $\{a^*,b^*\}$ such that \dmrev{an absolute} majority of voters in $V_{\text{abs}}(\sigma_1, \pi)$ disagree with~$\pi$, as we proved above. Finally, reordering the terms as in Line~\ref{eq:3a} leads back to the same formula as on the left-hand side in Line~\ref{eq:2}, creating a contradiction.
\myqed\end{proof}
     
     The pair of candidates $(a^*,b^*)$ in Claim~\ref{cl:ab} leads us to a suitable ranking~$\sigma_2$. Let $\sigma_2$ be the ranking we get from $\sigma_1$ by the swap $(a^*,b^*) \rightarrow (b^*,a^*)$. We now prove Claims~\ref{cl:preferb} and~\ref{cl:preferpi1}, which  we will use to show that two groups of voters prefer  $\sigma_2$ to $\pi$, and that these two groups constitute \dmrev{an absolute} majority of all voters.
     
     \begin{myclaim}
     \label{cl:preferb}
        All voters $v \in V_{\text{abs}}(\sigma_1, \pi)$ who prefer $b^*$ to $a^*$ also prefer $\sigma_2$ to $\pi$.
     \end{myclaim}
     \begin{proof}
        If $v \in V_{\text{abs}}(\sigma_1, \pi)$ prefers $b^*$ to $a^*$, then the following holds.
    \begin{eqnarray}
	\label{eq:4} K(\pi_v,\sigma_2) & = & K(\pi_v,\sigma_1)-1 \\ 
	\label{eq:5} & < & K(\pi_v,\sigma_1) \\
    \label{eq:5a} & = & K(\pi_v,\pi)
	\end{eqnarray}
	Line~\ref{eq:4} is true, because $\sigma_2$ is obtained from $\sigma_1$ by performing a swap that is good for voter~$v$. The equality in Line~\ref{eq:5a} holds since $v$ abstains between $\sigma_1$ and~$\pi$. Line~\ref{eq:5a} together with the inequality in Line~\ref{eq:5} prove the claim.
     \myqed\end{proof}
     
     The second group of voters we investigate consists of voters who prefer $\sigma_1$ to $\pi$. This group by assumption makes up \dmrev{an absolute} majority of the non-abstaining voters $V \setminus V_{\text{abs}}(\sigma_1, \pi)$. Let voter $v$ belong to this group.
       
     \begin{myclaim}
     \label{cl:preferpi1}
        If $K(\pi_{v},\sigma_1)<K(\pi_{v},\pi)$ for voter $v \in V \setminus V_{\text{abs}}(\sigma_1, \pi)$, then $K(\pi_{v},\sigma_1)+2\leq K(\pi_{v},\pi)$.
     \end{myclaim}
     \begin{proof}
     Only the pairs in $D(\sigma_1, \pi)$ contribute differently to $K(\pi_{v},\sigma_1)$ and to $K(\pi_{v},\pi)$. In particular, a pair in $D(\sigma_1, \pi)$ adds $1$ to either $K(\pi_{v},\sigma_1)$ or $K(\pi_{v},\pi)$, and $0$ to the other. Using this we can see that if $k$ is the number of pairs on whose order $\sigma_1$ and $\pi$ agree, but $\pi_{v}$ disagrees, then $K(\pi_{v},\sigma_1)+K(\pi_{v},\pi)=|D(\sigma_1, \pi)|+2k$, which is even by Claim~\ref{cl:Ceven}. Since $K(\pi_{v},\sigma_1)<K(\pi_{v},\pi)$ by assumption and their sum $|D(\sigma_1, \pi)|+2k$ is even, $K(\pi_{v},\sigma_1)+2\leq K(\pi_{v},\pi)$ must hold.
     \myqed\end{proof}
     Since a swap of consecutive candidates in a ranking can increase the distance to any other ranking by at most one, Claim~\ref{cl:preferpi1} implies that for any voter $v$ who prefers $\sigma_1$ to $\pi$, the following holds:
     $$K(\pi_{v},\sigma_2)\leq  K(\pi_{v},\sigma_1)+1<K(\pi_{v},\pi).$$
     We conclude that voters who prefer $\sigma_1$ to $\pi$ also prefer $\sigma_2$ to~$\pi$.
     
    Claims~\ref{cl:ab} and~\ref{cl:preferb} \dmrev{imply} that at least half of the voters in $V_{\text{abs}}(\sigma_1, \pi)$ prefer $\sigma_2$ to $\pi$, and Claim~\ref{cl:preferpi1} proves that more than half of the voters outside of $V_{\text{abs}}(\sigma_1, \pi)$ prefer $\sigma_2$ to~$\pi$. The two sets of voters thus constitute \dmrev{an absolute} majority of all voters who prefer $\sigma_2$ to~$\pi$. 
     \qed\end{proof}

By rephrasing \dmrev{Lemma}~\ref{condeq}, we \dmrev{obtain} the following.

\begin{theorem}\label{cor:equiv}
If ranking $\pi$ is weakly popular, and for any ranking $\sigma$, $V_{\text{abs}}(\sigma, \pi)$ has an acyclic majority graph, then $\pi$ is also strongly popular. Furthermore, if $V_{\text{abs}}(\sigma, \pi)$ has an acyclic majority graph for each weakly popular ranking $\pi$ and any ranking $\sigma$ for a profile $P$, then weakly and strongly popular rankings for $P$ coincide.
\end{theorem}
\begin{observation}
\label{obs:nec}
\dmrev{
The acyclicity condition for the majority graph of the abstaining voters in Lemma~\ref{condeq} and Theorem~\ref{cor:equiv} is necessary.}
\end{observation}

\begin{proof}
\dmrev{
 We construct an example to demonstrate that there is a profile and rankings $\sigma_1$, $\sigma_2$ such that the majority graph of the voters in $V_{\text{abs}}(\sigma_1, \sigma_2)$ is cyclic (contradicting our assumption in Lemma~\ref{condeq} and Theorem~\ref{cor:equiv}), and $\sigma_2$ is weakly popular but not strongly popular.}

\dmrev{
Consider the profile described in Figure~\ref{fi:ex} with $\sigma_{1}=[1,2,3],[5,6,4],[9,7,8]$ and $\sigma_{2}=[1,2,3], [4,5,6],[7,8,9]$, as in the proof of Theorem~\ref{example}. Notice that there are three directed cycles in the majority graph of the three abstaining voters $v_1,v_2,v_3$: one for candidates $1,2,3$, one for candidates $4,5,6$ and one for candidates $7,8,9$. So the abstaining voters have a cyclic majority graph.}

\dmrev{As shown by Claims \ref{cl1} and \ref{cl2} in the proof of Theorem~\ref{example}, $\sigma_2$ is 
weakly popular but not strongly popular, since $\sigma_1$ is more popular than $\sigma_2$ in the simple sense. Therefore, this example justifies the necessity of the acyclicity condition for the majority graph of the abstaining voters in Lemma~\ref{condeq} and Theorem~\ref{cor:equiv}.
}\qed\end{proof}

\subsection{At most \dmrev{five} voters}
\label{se:5voters}

From \dmrev{Lemma}~\ref{condeq} we can deduce that for a small number of voters, the two notions of popularity are equivalent. This is due to the fact that we need at least \dmrev{three} abstaining voters in order for $V_{\text{abs}}(\sigma, \pi)$ to have a cyclic majority graph.

\begin{theorem}\label{atmost5}
A ranking $\sigma$ is weakly popular for a profile $P$ with at most five voters if and only if it is strongly popular for~$P$.
\end{theorem}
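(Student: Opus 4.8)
The natural strategy is to combine Corollary~\ref{cor:equiv} with a counting argument showing that, with at most five voters, the set $V_\text{abs}(\sigma,\pi)$ can never have a cyclic majority graph, no matter which rankings $\sigma$ and $\pi$ we pick. Since every simply popular ranking is absolutely popular by definition, only the forward direction needs work: assume $\sigma$ is absolutely popular but not simply popular, derive a contradiction. So suppose some $\sigma_1$ is more popular than $\sigma$ in the simple sense. By Corollary~\ref{cor:equiv}, it suffices to show that $V_\text{abs}(\sigma_1,\sigma)$ has an acyclic majority graph; then $\sigma$ would fail to be absolutely popular, a contradiction.

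First I would observe that a directed cycle in the majority graph of a voter set $W$ requires a Condorcet-style configuration, and the smallest such configuration needs $|W| \ge 3$ voters whose preferences on the relevant three candidates form a cyclic pattern; with $|W| \le 2$ the majority relation on any pair is decided by a single voter (or a tie), so it is transitive and hence acyclic. So the claim reduces to: with at most five voters total, $|V_\text{abs}(\sigma_1,\sigma)| \le 2$. The key ingredient is Claim~\ref{cl:Ceven} from the proof of Theorem~\ref{condeq}: if $\sigma$ is topologically sorted (which it must be, being absolutely popular, by Observation~\ref{obs:topsort}) and $\sigma_1 \ne \sigma$ with $V_\text{abs}(\sigma_1,\sigma)\neq\emptyset$, then $|D(\sigma_1,\sigma)|$ is even and each abstaining voter disagrees with $\sigma$ on exactly half of those pairs. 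The remaining step is the pigeonhole-type count on the non-abstaining voters: since $\sigma_1$ is more popular than $\sigma$ in the simple sense, strictly more than half of $V\setminus V_\text{abs}(\sigma_1,\sigma)$ prefer $\sigma_1$; in particular $|V\setminus V_\text{abs}(\sigma_1,\sigma)| \ge 1$, and if we can push this to $\ge 3$ we are done. I expect the argument to be: $\sigma$ being topologically sorted means that for the consecutive pair of $\sigma_1$ that $\sigma$ orders oppositely (a pair in $D(\sigma_1,\sigma)$, which exists since $\sigma_1 \ne \sigma$), the number of voters preferring $\sigma$'s order is at least half of all voters — because $\sigma$ respects the majority on every pair — yet among non-abstaining voters a strict majority go with $\sigma_1$; balancing these forces at least three voters to be non-abstaining, leaving at most two in $V_\text{abs}(\sigma_1,\sigma)$.

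The main obstacle is making the counting on that distinguished pair airtight. One has to be careful about whether $\sigma$ orders the relevant consecutive pair of $\sigma_1$ the "majority" way or merely "not against the majority" (ties), and about the parity subtleties: with an even or odd number of total voters the thresholds for absolute versus simple majority shift. The cleanest route is probably to pick a pair $(a^*,b^*)$ consecutive in $\sigma_1$ with $\mathrm{rank}_\sigma[b^*] < \mathrm{rank}_\sigma[a^*]$ (exists since $\sigma_1\neq\sigma$), note $(a^*,b^*)\in D(\sigma_1,\sigma)$, and compare $|V(a^*,b^*)|$ and $|V(b^*,a^*)|$. Since $\sigma$ is topologically sorted and orders $b^*$ before $a^*$, we have $|V(b^*,a^*)| \ge |V(a^*,b^*)|$. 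Meanwhile every abstaining voter contributes to exactly one of these two sets (they all disagree with $\sigma$ or agree with $\sigma$ on this single pair, split in some way across abstainers), and the simple-majority condition forces the $\sigma_1$-side to dominate among non-abstainers; combining these inequalities should yield $|V_\text{abs}(\sigma_1,\sigma)| \le 2$ whenever $|V| \le 5$. I would then close by invoking Corollary~\ref{cor:equiv}: acyclicity of $V_\text{abs}(\sigma_1,\sigma)$ plus absolute popularity of $\sigma$ contradicts the existence of $\sigma_1$, so no such $\sigma_1$ exists and $\sigma$ is simply popular.
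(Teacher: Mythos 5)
Your reduction of the forward direction to Corollary~\ref{cor:equiv} covers only part of the argument, and the counting step you hope will close it cannot work. The claim ``$|V_\text{abs}(\sigma_1,\sigma)| \le 2$ whenever $|V|\le 5$'' is false: simple popularity of $\sigma_1$ over $\sigma$ only requires that, among non-abstainers, strictly more voters prefer $\sigma_1$ than prefer $\sigma$. If \emph{no} voter prefers $\sigma$ to $\sigma_1$, a single voter preferring $\sigma_1$ suffices, leaving up to four abstainers out of five voters, and their majority graph may well be cyclic. No pigeonhole on a distinguished pair $(a^*,b^*)$ can rule this out; moreover, that step conflates two different things --- a voter preferring $\sigma_1$ to $\sigma$ as rankings tells you nothing about how that voter orders the particular pair $(a^*,b^*)$, so the inequality $|V(b^*,a^*)|\ge|V(a^*,b^*)|$ cannot be ``combined'' with the simple-majority condition in the way you sketch.

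The paper splits into exactly these two cases. If at least one voter prefers $\sigma$ to $\sigma_1$, then at least two prefer $\sigma_1$, so at most two voters abstain, their majority graph is trivially acyclic (with two voters an arc requires unanimity), and Theorem~\ref{condeq} applies --- this matches your plan. In the remaining case, where no voter prefers $\sigma$ to $\sigma_1$, the paper abandons the abstainer-counting route entirely: every voter satisfies $K(\pi_v,\sigma_1)\le K(\pi_v,\sigma)$ with strict inequality for at least one voter, hence $\sum_v K(\pi_v,\sigma_1) < \sum_v K(\pi_v,\sigma)$, so $\sigma$ is not a Kemeny consensus and therefore not absolutely popular by Observation~\ref{obs:topsort}. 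You need this second, Kemeny-based argument (or some substitute for it); without it the proof has a genuine hole.
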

\begin{proof}
From Observation~\ref{obs:topsort} we know that strongly popular rankings are also weakly popular, which allows us to concentrate only on one direction of the statement, namely that if a ranking is not strongly popular, then it also cannot be weakly popular. Let us assume that ranking $\pi$ is not strongly popular. By definition there exists a ranking $\sigma$ that is preferred to $\pi$ by \dmrev{an absolute} majority of non-abstaining voters $V \setminus V_{\text{abs}}(\sigma, \pi) \neq \emptyset$.

If at least one voter prefers $\pi$ to $\sigma$, then at least two voters must prefer $\sigma$ to $\pi$, and the remaining at most two abstaining voters can only form an acyclic majority graph. From this it follows by \dmrev{Lemma}~\ref{condeq} that $\pi$ is not weakly popular.

On the other hand, if no voter prefers $\pi$ to $\sigma$, then $K(\pi_{v},\sigma) \leq K(\pi_v,\pi)$ holds for all voters. For $V \setminus V_{\text{abs}}(\sigma, \pi) \neq \emptyset$, by assumption there is a voter $v^*$ who prefers $\sigma$ to $\pi$, that is, for whom $K(\pi_{v^*},\sigma)<K(\pi_{v^*},\pi)$. We thus have $\sum_{i=1}^{n} K(\pi_{v_{i}},\sigma)< \allowbreak \sum_{i=1}^{n} K(\pi_{v_{i}},\pi)$. This means that $\pi$ is not a Kemeny consensus and by Observation~\ref{obs:topsort}, $\pi$ is not weakly popular.
\qed\end{proof}

\section{On the complexity of \dmrev{$k$-{\sc wurv}} and $k$-{\sc surv}}
\label{smallvoters}

In this section, we analyse the complexity of the verification problems for the two popularity notions. We prepare for this by giving a sufficient condition for weak popularity in Section~\ref{sec:suffcond}. This condition will then be used in Section~\ref{sec:polsolv}, where we prove the polynomial solvability of both problems in the case of $k \leq 3$. For $4\leq k \leq 5$, we reach the same conclusion in Section~\ref{sec:45}, however, only for special profiles. Then, in Section~\ref{sec:n6}, $\NP$-hardness is proved for $k=6$.

\subsection{A sufficient condition for \dmrev{{\sc wurv}}}
\label{sec:suffcond}

We call a ranking $\pi$ \textit{$c$-sorted} for $0 < c \leq 1$ if for all pairs of candidates $\{a,b\}$ with $\rank_{\pi}[a]<\rank_{\pi}[b]$, at least a $c$-fraction of the voters prefers $a$ to~$b$. A ranking $\pi$ \dmrev{is topologically sorted} \dmrev{if and only if} it is $\frac{1}{2}$-sorted. In \cite{VSW14}, van Zuylen et al.\ show that even a topologically sorted ranking is not necessarily weakly popular. Here we ask for which constant $\frac{1}{2} < c\leq 1$ does this negative result change to a positive result, guaranteeing a \dmrev{no} answer for \dmrev{$k$-{\sc wurv}}.  \dmrev{(Note that we do not consider the case where $c<\frac{1}{2}$, since any $c$-popular ranking cannot be topologically sorted and hence cannot be weakly popular by Observation 
\ref{obs:topsort}.)}

\begin{theorem}\label{tightc}
$c=\frac{3}{4}$ is the smallest constant (\dmrev{$\frac{1}{2}<c\leq 1$})
for which the following holds: If a profile $P$ admits a $c$-sorted ranking $\pi$, then $\pi$ is weakly popular.
\end{theorem}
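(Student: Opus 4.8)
The plan is to establish the two halves of the statement separately: \emph{sufficiency}, that every $\tfrac34$-sorted ranking is absolutely popular, and \emph{tightness}, that for every $c<\tfrac34$ there is a voting instance with a $c$-sorted ranking that is not absolutely popular. (Sufficiency automatically gives the property for all $c\ge\tfrac34$, since a $c$-sorted ranking with $c\ge\tfrac34$ is in particular $\tfrac34$-sorted; so the two halves together pin down $\tfrac34$ as the least good constant.) For sufficiency I would argue by contradiction: assume $\pi$ is $\tfrac34$-sorted but some ranking $\sigma\neq\pi$ is more popular than $\pi$ in the absolute sense, and let $D\neq\emptyset$ be the set of candidate pairs on whose order $\pi$ and $\sigma$ disagree. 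Only pairs of $D$ contribute to $K(\pi_v,\sigma)-K(\pi_v,\pi)$, and summing their contributions gives $K(\pi_v,\sigma)-K(\pi_v,\pi)=2a_v-|D|$, where $a_v$ is the number of pairs of $D$ on which voter $v$ agrees with $\pi$; hence $v$ prefers $\sigma$ exactly when $a_v<|D|/2$, i.e.\ (since $a_v$ is an integer) when $a_v\le\lfloor(|D|-1)/2\rfloor\le(|D|-1)/2$. Now I double-count $\sum_{v\in V}a_v$. On the one hand, $\tfrac34$-sortedness says that for each pair of $D$ the order $\pi$ uses is the one chosen by at least $\tfrac34 n$ voters, so $\sum_v a_v\ge\tfrac34 n\,|D|$. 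On the other hand, writing $P$ for the set of voters who prefer $\sigma$, we have $|P|>n/2$, and bounding $a_v\le(|D|-1)/2$ for $v\in P$ and $a_v\le|D|$ otherwise yields $\sum_v a_v\le|P|\tfrac{|D|-1}{2}+(n-|P|)|D|=n|D|-|P|\tfrac{|D|+1}{2}<n|D|-\tfrac n2\cdot\tfrac{|D|+1}{2}=\tfrac{n(3|D|-1)}{4}<\tfrac34 n\,|D|$, a contradiction. Hence $\pi$ is absolutely popular.

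For tightness, fix $c<\tfrac34$ and an odd integer $k=2t+1$ (with $t$ large, to be chosen). I build an instance on $2k$ candidates split into blocks $B_j=\{a_j,b_j\}$, $1\le j\le k$, in which every voter ranks all of $B_i$ ahead of all of $B_j$ whenever $i<j$; a voter is then described by a string in $\{0,1\}^k$ recording, for each block, whether she prefers $a_j$ to $b_j$ (bit $1$) or the reverse (bit $0$), and since the blocks are independent every such string is a legal ranking. As voters I take one copy of each of the $\binom kt$ strings of weight exactly $t$, together with $\binom kt-1$ voters of the all-ones string, and set $\pi=[a_1,b_1],\dots,[a_k,b_k]$ and $\sigma=[b_1,a_1],\dots,[b_k,a_k]$. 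Then $D=D(\sigma,\pi)$ is exactly the set of the $k$ within-block pairs; each weight-$t$ voter has $a_v=t<k/2$ and hence prefers $\sigma$ to $\pi$, and these $\binom kt$ weight-$t$ voters form a strict majority of the $2\binom kt-1$ voters, so $\pi$ is not absolutely popular. Finally, for a within-block pair the fraction of voters agreeing with $\pi$ equals $\bigl(\binom{k-1}{t-1}+\binom kt-1\bigr)/\bigl(2\binom kt-1\bigr)$, while for every cross-block pair it is $1$; using $\binom{k-1}{t-1}=\tfrac{t}{2t+1}\binom kt$, a short computation shows this fraction is always strictly below $\tfrac34$ but tends to $\tfrac34$ as $t\to\infty$, so for $t$ large enough it exceeds $c$, and then $\pi$ is a $c$-sorted ranking that is not absolutely popular, as required.

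The delicate part is the tightness construction: one needs a family of voters that simultaneously makes every individual pairwise comparison of $\pi$ nearly a $\tfrac34$-majority yet lets a strict overall majority profit from reversing all of $D$ at once. Encoding voters as bit-strings over independent blocks both removes transitivity obstructions (all of $\{0,1\}^k$ is attainable) and makes it transparent that the ``weight-$t$ voters plus all-ones padding'' family sits on the extremal boundary, pushing the attainable sortedness arbitrarily close to $\tfrac34$; the only genuine calculation is verifying the two numerical facts about that fraction (that it is $<\tfrac34$ and that its limit is $\tfrac34$). The remaining points are routine bookkeeping --- the absolute-majority requirement is a strict inequality while $c$-sortedness is an ``at least'' condition, and $|D|$ may be even or odd in the sufficiency argument --- and both are absorbed by the estimate $\lfloor(|D|-1)/2\rfloor\le(|D|-1)/2$ together with the $\tfrac n4$ of slack left in the final chain of inequalities.
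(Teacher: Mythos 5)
Your proposal is correct in both halves, and the sufficiency direction is essentially the paper's argument in different clothing: the paper walks along the bubble-swap path from $\pi$ to $\sigma$ and telescopes $\sum_{v\in V'}\bigl(K(\pi_{i+1},\pi_v)-K(\pi_i,\pi_v)\bigr)$ over an arbitrary majority-sized $V'$, which amounts to exactly your double count of agreements over the disagreement set $D$ (each swap on the path corresponds to one pair of $D$). Your version is self-contained and avoids invoking the bubble-sort machinery; the paper's version reuses notation it has already set up. The tightness half is where you genuinely diverge. The paper builds $4j$ voters on $4j+2$ candidates via a cyclic-shift pattern: $2j-1$ voters equal to $\pi$ and $2j+1$ voters each reversing a cyclically shifted window of $j+1$ of the $2j+1$ blocks, so that each within-block pair is agreed with by exactly $3j-1$ of $4j$ voters, i.e.\ a $\bigl(\tfrac34-\tfrac1{4j}\bigr)$-fraction. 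Your construction instead takes all $\binom{2t+1}{t}$ weight-$t$ bit strings plus $\binom{2t+1}{t}-1$ copies of the all-ones string. Both are valid; your calculation of the agreement fraction (via $\binom{k-1}{t-1}=\tfrac{t}{2t+1}\binom kt$) checks out, it is strictly below $\tfrac34$ and tends to $\tfrac34$. The trade-off is witness size: the paper's instance has $O(1/\varepsilon)$ voters, while yours has exponentially many in the number of blocks (hence superpolynomially many in $1/\varepsilon$), which is irrelevant for the theorem as stated but matters for the paper's related questions about small, fixed numbers of voters. In exchange, your symmetric construction makes the majority-preference claim immediate ($2a_v-|D|=2t-(2t+1)=-1$ for every weight-$t$ voter) where the paper has to argue about which blocks each shifted voter reverses.
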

\begin{proof}
 We first show that any $\frac{3}{4}$-sorted ranking is weakly popular. 
\begin{myclaim}\label{cl:34wp}
\dmrev{Let $P$ be a profile and $\pi$ be a $\frac{3}{4}$-sorted ranking for $P$.  Then $\pi$ is weakly popular.}
\end{myclaim}

\begin{proof}
Let $\sigma$ be any ranking different from~$\pi$. We will show that there is no voter set of cardinality $\floor{\frac{n}{2}}+1$, in which every voter prefers $\sigma$ to~$\pi$. Let $V'$ be an arbitrary set of $\floor{\frac{n}{2}}+1$ voters. On the bubble swap path  $\pi_{0}:=\pi,\pi_{1},\ldots \pi_{k}:=\sigma$ from $\pi$ to $\sigma$, each swap $\pi_{i}\rightarrow \pi_{i+1}$ is bad for at least \dmrev{a} \dmrev{$\frac{3}{4}$\dmrev{-fraction of} voters in $V$. So it must be bad for at least $$\dmrev{\left\lfloor\frac{n}{2}\right\rfloor}+1-\frac{n}{4}\geq \frac{|V'|}{2}$$
voters in $V'$.}
Hence for a swap $\pi_{i}\rightarrow\pi_{i+1}$, we have that $\sum_{v\in V'}(K(\pi_{i+1},\pi_{v})-K(\pi_{i},\pi_{v})) \geq 0$. Summing over all swaps we get a telescoping sum that reduces to $$\sum_{v\in V'}(K(\sigma,\pi_{v})-K(\pi,\pi_{v}))\geq 0.$$ In particular, not every $v\in V'$ can prefer $\sigma$ to~$\pi$. Since $V'$ was an arbitrary set of size $\floor{\frac{n}{2}}+1$, no voter set of at least this size can prefer $\sigma$ to $\pi$, and thus, $\pi$ must be weakly popular.
\myqed\end{proof}

We now show that in fact $c=\frac{3}{4}$ is tight, meaning that for any \dmrev{$\frac{1}{2}<$} $c<\frac{3}{4}$, we can construct a profile $P$ and a $c$-sorted ranking $\pi$ such that $\pi$ is not weakly popular for~$P$.
\begin{myclaim}
\label{th:topsort_notwp}
 For arbitrary $\frac{1}{2}\leq c<\frac{3}{4}$ there exists a profile $P$ \dmrev{and a} \dmrev{c-sorted} \dmrev{ranking} \dmrev{$\pi$ \dmrev{that} is not weakly popular}.
\end{myclaim}

\begin{proof}
\dmrev{Let $c=\frac{3}{4}-\varepsilon$ for some $\frac{1}{4}\geq \varepsilon>0$. Next, choose $j$ such that $j \geq \frac{1}{4\varepsilon}$.} 
We \dmrev{will} create profile $P$ with $4j$ voters and $4j+2$ candidates. 
Then it holds that $(\frac{1}{4}+\varepsilon)|V|\geq j+1$.

\dmrev{Each voter's ranking involves $2j+1$ \emph{blocks}, where block $i$ comprises candidates $\{2i-1,2i\}$ ($1\leq i\leq 2j+1$).  We say that block $i$ is \emph{increasing} if it is in the form $[2i-1,2i]$ and it is \emph{decreasing} if it is in the form $[2i,2i-1]$.}

\dmrev{We firstly create a set $V_1$ of $2j-1$ voters, and each voter in $V_1$ has only increasing blocks, meaning that their ranking is $\pi=[1,2],[3,4],\ldots,$ $\mbox{[4$j$+1,4$j$+2]}$.
The \dmrev{set $V_2$ comprises the $2j+1$} remaining voters, and their set of rankings is constructed as follows. The $i$th voter in $V_2$ ($1\leq i\leq 2j+1$) has blocks $i\mod (2j+1)$ to $(i+j)\mod (2j+1)$ decreasing, whilst all other blocks are increasing.  This construction is illustrated in Figure~\ref{ta:blocks} for the case that $j=3$.
}
This construction is similar to the one in~\cite[Lemma~3]{VSW14}.

\begin{figure}[ht]
\begin{center}
\resizebox{0.65\columnwidth}{!}{%
\begin{tabular}{ L L L L L L L L}
 \cellcolor{C1}[2,1] & \cellcolor{C1}[4,3]  & \cellcolor{C1}[6,5] & \cellcolor{C1}[8,7] & [9,10] & [11,12] & [13,14]\\
 \cellcolor{white}[1,2] & \cellcolor{C1}[4,3] & \cellcolor{C1}[6,5] & \cellcolor{C1}[8,7] & \cellcolor{C1}[10,9] & [11,12] 
 & [13,14]\\
 \cellcolor{white}[1,2] & [3,4] & \cellcolor{C1}[6,5] & \cellcolor{C1}[8,7] & \cellcolor{C1}[10,9] & \cellcolor{C1}[12,11] & [13,14]\\
 \cellcolor{white}[1,2] & [3,4] & [5,6] & \cellcolor{C1}[8,7] & \cellcolor{C1}[10,9] & \cellcolor{C1}[12,11]  
& \cellcolor{C1}[14,13]\\
\cellcolor{C1}[2,1] & [3,4] & [5,6] & [7,8] & \cellcolor{C1}[10,9] & \cellcolor{C1}[12,11]  
& \cellcolor{C1}[14,13]\\
  \cellcolor{C1}[2,1] & \cellcolor{C1}[4,3] &[5,6] & [7,8] & [9,10] & \cellcolor{C1}[12,11]
  & \cellcolor{C1}[14,13]\\
  \cellcolor{C1}[2,1] & \cellcolor{C1}[4,3] &\cellcolor{C1}[6,5] & [7,8] & [9,10] & [11,12]
  & \cellcolor{C1}[14,13]
  

\end{tabular}
}
\end{center}
    \caption{\dmrev{The \dmrev{rankings of the voters $V_2$, as described in the proof of Claim \ref{th:topsort_notwp}, for the special case where $j=3$.}  The \dmrev{decreasing blocks} are highlighted.}}
    \label{ta:blocks}
\end{figure}

For each \dmrev{$\{2i-1,2i\}$} for $1\leq i \leq 2j+1$, every $v\in V_{1}$ agrees with this pair and also exactly $j$ of the voters in $V_{2}$ agree with it. In total, $3j-1$ voters thus agree with this pair.
 This gives the following fraction of all voters:
 $$\frac{3j-1}{4j}=\frac{3}{4}-\frac{1}{4j} \geq \frac{3}{4}- \varepsilon.$$
It is trivial to see for all other pairs of voters $\{a,b\}$ that if $a<b$ then all voters prefer $a$ to~$b$. Hence in particular $\pi$ is $\left(\frac{3}{4}-\varepsilon\right)$-sorted.

We now show that ranking $\sigma=[2,1],[4,3],\ldots [4j+2,4j+1]$ is preferred to $\pi$ by \dmrev{an absolute} majority of all voters, namely by all $2j+1$ voters in~$V_{2}$. Each voter in $V_{2}$ has exactly $j+1$ blocks with decreasing order and $j$ blocks with increasing order. \dmrev{Therefore}, each of them would rather have all pairs in decreasing order than all pairs in increasing order.
\myqed\end{proof}
\dmrev{Theorem\dmrev{~\ref{tightc}} is thus established.} 
\qed\end{proof}


As an aside, following on from Theorem~\ref{tightc}, \dmrev{it is natural to ask about the existence of 
\emph{$c$-popular rankings:} rankings that are preferred to all other rankings by some $c$-fraction of voters.} 


\begin{theorem}
\label{th:csorted}
\dmrev{There is a profile that does not admit a c-popular ranking for any $c>0$.}
\end{theorem}
\begin{proof}
\dmrev{\dmrev{Let $n=\ceil{\frac{1}{c}}+1$.} We exhibit a profile of $n$ voters over $n$ candidates, such that for any ranking $\pi$, there exists another ranking $\sigma$ such that $n-1$ voters prefer $\sigma$ to $\pi$ and only one voter prefers $\pi$ to~$\sigma$.}
\dmrev{This implies that $\pi$ cannot be preferred to any other ranking by a $c$-fraction of the voters, since $\frac{1}{n}<c$.}

Consider the extended Condorcet paradox, i.e.\ $n$ voters with rankings of $n$ candidates as follows:
\begin{center}
\begin{tabular}{rccl}
$\pi_{v_{1}}=$ & $1,2, $&$\ldots,$&$ n-1,n$\\ 
$\pi_{v_{2}}=$ & $2,3, $&$\ldots,$&$ n,1$\\ 
 \vdots&&&\\
$\pi_{v_{n}}$= &$ n,1,$ &$\ldots,$& $n-2,n-1$
\end{tabular}
\end{center}

Let $\pi$ be an arbitrary ranking of the $n$ candidates. 
For each ordered pair $(a,b)$ in the set $A:=\{(1,2),(2,3),\ldots (n-1,n),(n,1)\}$, exactly $n-1$ voters of the above instance agree with $(a,b)$.
We also know that there must be at least one ordered pair in $A$ with which \dmrev{the ranking $\pi$ disagrees.
Let $(a,b)$ be such a pair, so $\pi$ \dmrev{prefers $b$ to $a$}, but $n-1$ voters prefer $a$ to $b$.}

Let $\sigma$ be the ranking obtained from $\pi$ by swapping $b$ and~$a$. 
\dmrev{Let \dmrev{$d$} be a candidate ranked between $b$ and $a$ in~$\pi$. Each voter $v$ who prefers $a$ to $b$ must also prefer $a$ to \dmrev{$d$} or \dmrev{prefer $d$} to~$b$. So since $\rank_{\dmrev{\pi}}[b]<\rank_{\dmrev{\pi}}[\dmrev{d}]<\rank_{\dmrev{\pi}}[a]$ together the pairs $(a,\dmrev{d})$ and $(\dmrev{d},b)$ add at least \dmrev{$1$} to $K(\pi_v,\pi)$. 
Since $\rank_{\dmrev{\sigma}}[a]<\rank_{\dmrev{\sigma}}[\dmrev{d}]<\rank_{\dmrev{\sigma}}[b]$, at most one of the pairs $(a,\dmrev{d})$ and $(\dmrev{d},b)$ adds \dmrev{$1$} to $K(\pi_v,\sigma)$.}  Pair $(a,b)$ adds \dmrev{$1$} to $K(\pi_v,\pi)$ and $0$ to $K(\pi_v,\sigma)$.  From the definition of $\sigma$, it follows that $K(\pi_v,\sigma)<K(\pi_v,\pi)$ for each voter who prefers $a$ to $b$ and there are $n-1$ such voters $v$.
\qed\end{proof}

\subsection{Polynomial-time solvability for $k\leq 3$}
\label{sec:polsolv}

Since we have shown in Theorem~\ref{atmost5} that \dmrev{weak} and \dmrev{strong} popularity are equivalent notions for $k \leq 5$, we will refer to them as 
popularity if $k \leq 5$.

We first show that for $k\leq 3$, the problems $k$-\textsc{wurv} and $k$-\textsc{surv} are polynomial-time solvable. We establish this by proving that for at most three voters, the set of topologically sorted rankings coincides with the set of popular rankings. Since verifying whether a given ranking is topologically sorted can be carried out in polynomial time, \dmrev{$k$-\textsc{wurv}} and $k$-\textsc{surv} turn out to be polynomial-time solvable for $k=2,3$.
\begin{lemma}
     \label{cl:twovoters}
       Given a profile of two voters, a ranking is popular if and only if it is topologically sorted.
     \end{lemma}
\begin{proof}
From Observation~\ref{obs:topsort} we know that all popular rankings must be topologically sorted. Let $D(\pi_{v_{1}},\pi_{v_{2}})$ be the set of pairs of candidates $\{a,b\}$ that $\pi_{v_{1}}$ and $\pi_{v_{2}}$ order differently. Consider any ranking $\pi$. Each pair $\{a,b\} \in D(\pi_{v_{1}},\pi_{v_{2}})$ adds $1$ to either $K(\pi_{v_{1}},\pi)$ or $K(\pi_{v_{2}},\pi)$. 
From this follows that
$$|D(\pi_{v_{1}},\pi_{v_{2}})|\leq K(\pi_{v_{1}},\pi)+K(\pi_{v_{2}},\pi).$$

If $\sigma$ is a topologically sorted ranking, then by definition there is no pair of candidates $\{a,b\}$ that adds $1$ to both $K(\pi_{v_{1}},\sigma)$ and $K(\pi_{v_{2}},\sigma)$. Thus,  $K(\pi_{v_{1}},\sigma)+K(\pi_{v_{2}},\sigma)= |D(\pi_{v_{1}},\pi_{v_{2}})|$. If $\pi$ is preferred to $\sigma$ by \dmrev{an absolute} majority, then both voters prefer $\pi$ to $\sigma$, which leads to
\begin{eqnarray*}
|D(\pi_{v_{1}},\pi_{v_{2}})| & \leq & K(\pi_{v_{1}},\pi)+K(\pi_{v_{2}},\pi)\\
& < & K(\pi_{v_{1}},\sigma)+K(\pi_{v_{2}},\sigma)\\
& = & |D(\pi_{v_{1}},\pi_{v_{2}})|.
\end{eqnarray*}
Since this is a contradiction, $\sigma$ must be weakly popular. By Theorem~\ref{atmost5}, $\sigma$ is also strongly popular.
     \qed\end{proof}

\begin{lemma}
     \label{3voters}
        Given a profile $P$ of three voters, a ranking is popular if and only if it is topologically sorted.
     \end{lemma}
     
\begin{proof}
Just as for the $k=2$ case, Observation~\ref{obs:topsort} implies here as well that all popular rankings must be topologically sorted. Let $\pi$ be a topologically sorted ranking for~$P$. Note that whenever $\rank_{\pi}[a]~<~\rank_{\pi}[b]$ holds for candidates $a$ and $b$, at least half of the voters, that is, at least two voters prefer $a$ to $b$. So for any two voters, at least one of them prefers $a$ to $b$, implying that $\pi$ is also topologically sorted for any two of the three voters. In particular, $\pi$ is weakly popular for any two voters by Lemma~\ref{cl:twovoters}, showing that there is no ranking that they both prefer to~$\pi$. Hence $\pi$ must be weakly popular \dmrev{for} $P$ and by Theorem~\ref{atmost5}, also strongly popular.
\qed\end{proof}

\dmrev{Lemmas \ref{cl:twovoters} and \ref{3voters} lead to the following result regarding the complexity of $k$-\textsc{wurv} and $k$-\textsc{surv} for $k\leq 3$, and the complexity of finding a popular ranking or reporting that none exists in the case that $k\leq 3$.}
\begin{theorem}\label{th:smallk}
\dmrev{For $k\leq 3$, $k$-\textsc{wurv} and $k$-\textsc{surv} are solvable in $O(m^2n)$ time.  Moreover for $k\leq 3$, we can find a popular ranking or report that none exists in $O(m^2n)$ time.}
\end{theorem}
\begin{proof}
\dmrev{Let $D$ denote the majority graph for the given profile $P$.  Lemmas \ref{cl:twovoters} and \ref{3voters} state that a given ranking is popular if and only if it is topologically sorted for $P$.  Moreover a topologically sorted ranking exists if and only if $D$ is acyclic.}

\dmrev{To establish the time complexity, clearly it is trivial to construct in $O(m)$ time for each voter $v_i$ a data structure that allows us to check in $O(1)$ time whether $v_i$ prefers $a$ to $b$, for any pair of candidates $a,b$.  Thus the $m\times m$ matrix $N$, where $N(a,b)$ gives the number of voters who prefer candidate $a$ to candidate~$b$ (i.e.\ $N(a,b)=|V(a,b)|$), can be constructed in $O(m^2n)$ time.}

\dmrev{Using $N$, we can then compute $D$ in $O(m^2)$ time.  To find a popular ranking or report that none exists, we can check in $O(m^2)$ time whether $D$ is acyclic, and if so, construct a topological ordering of $D$ in the same time complexity. Alternatively, for a given ranking $\pi$, clearly we can check whether $\pi$ is topologically sorted for $P$ in $O(m^2)$ time.  The overall time complexity is thus $O(m^2n)$ for both the verification and search algorithms.}
\qed 
\end{proof}

\subsection{Equivalence of the cases $k=4$ and $k=5$}
\label{sec:45}
If we have \dmrev{four} or \dmrev{five} voters, it turns out a topologically sorted ranking may not be popular anymore. In the case of an acyclic tournament as the majority graph, finding and verifying a popular ranking are both polynomial-time solvable.
We further show that  $4$-\textsc{wurv} ($4$-\textsc{surv}) in general is polynomial-time solvable if and only if $5$-\textsc{wurv} ($5$-\textsc{surv}) is.

\begin{lemma}
     \label{cl:fourvotersa}
        If a profile $P$ of four voters has an acyclic tournament as its majority graph, then the unique topologically sorted ranking is the unique popular ranking.
     \end{lemma}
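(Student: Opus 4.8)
The plan is to derive the lemma from three ingredients that are already available: the uniqueness of the topological sort of an acyclic tournament (recorded in Section~\ref{prelims}), the sufficient condition of Theorem~\ref{tightc} (the $\tfrac34$-sorted criterion), and the equivalence of absolute and simple popularity for at most five voters (Theorem~\ref{atmost5}). The strategy is therefore: identify the unique topologically sorted ranking $\pi$, use Observation~\ref{obs:topsort} to see that $\pi$ is the only possible popular ranking, and then show that $\pi$ meets the hypothesis of Theorem~\ref{tightc}.

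First I would note that since the majority graph is an acyclic tournament it has a unique topologically sorted ranking, call it $\pi$; by Observation~\ref{obs:topsort} every popular ranking is topologically sorted, so $\pi$ is the only candidate, and it remains only to prove $\pi$ is popular. For that, the key computation is that $\pi$ is $\tfrac34$-sorted: for any pair $\{a,b\}$ with $\rank_\pi[a] < \rank_\pi[b]$, the rankings are strict so $|V(a,b)| + |V(b,a)| = 4$, and since the majority graph is a tournament the two quantities differ; as $\pi$ is topologically sorted the larger one is $|V(a,b)|$, hence $|V(a,b)| \ge 3 = \tfrac34 \cdot 4$. Thus every pairwise margin along $\pi$ is $3$--$1$ or $4$--$0$, which is exactly $\tfrac34$-sortedness.

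Finally, Theorem~\ref{tightc} applied with $c = \tfrac34$ yields that $\pi$ is absolutely popular, and since $n = 4 \le 5$, Theorem~\ref{atmost5} upgrades this to simple popularity; so $\pi$ is popular and, by the first step, it is the unique popular ranking. I do not expect a genuine obstacle here: the only real observation is that four voters with a tournament majority graph force every pairwise margin to be at least $3$-to-$1$, after which the claim is a direct corollary of machinery proved earlier. The two points needing a little care are (i) invoking Theorem~\ref{atmost5} so that the single word ``popular'' (meaning both notions at once for $n \le 5$) is justified, and (ii) citing the uniqueness of the topological sort of an acyclic tournament for the ``unique'' clause rather than leaving it implicit.
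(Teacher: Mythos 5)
Your proposal is correct and follows essentially the same route as the paper's own proof: the paper also observes that a tournament with $n=4$ forces the unique topologically sorted ranking to be $\tfrac34$-sorted and then cites Theorem~\ref{tightc} and Observation~\ref{obs:topsort}. Your version merely spells out the $3$-to-$1$ margin computation and the appeal to Theorem~\ref{atmost5} explicitly, both of which the paper leaves implicit.
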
  
\begin{proof}Since the majority graph of $P$ is a tournament, the unique topologically sorted ranking $\pi$ is $\frac{3}{4}$-sorted.
The lemma then follows from Theorem~\ref{tightc} and Observation~\ref{obs:topsort}.
\qed\end{proof}

\begin{lemma}
     \label{cl:4votersc}
        \dmrev{Let $P$ be a profile of four voters with an acyclic majority graph, and let $\pi$ be a ranking for $P$ that is popular for at least one of the profiles formed by three of the voters.  Then $\pi$ is popular for~$P$.}
     \end{lemma}
\begin{proof}
Let $\pi$ be popular for the profile $(\pi_{v_{1}},\pi_{v_{2}},\pi_{v_{3}})$. Then by definition there exists no ranking $\sigma$ that is preferred by two of $v_{1},v_{2},v_{3}$, as this would contradict the fact that $\pi$ is weakly popular for the profile $(\pi_{v_{1}},\pi_{v_{2}},\pi_{v_{3}})$ by Theorem~\ref{atmost5}. In particular, there does not exist a ranking $\sigma$ preferred by \dmrev{an absolute} majority of $v_{1},v_{2},v_{3},v_{4}$, as this would require at least \dmrev{three} voters and hence at least $2$ of $v_{1},v_{2},v_{3}$. We conclude that $\pi$ is weakly popular and hence popular for~$P$.
\qed\end{proof}

We now present an example in which there is ranking $\pi$ that is not topologically sorted such that $\pi$ is more popular than a topologically sorted ranking.

\begin{observation}
     \label{cl:fourvotersb}
        Given a profile of four voters with an acyclic majority graph, a ranking that is not topologically sorted can be more popular than a topologically sorted ranking.
     \end{observation}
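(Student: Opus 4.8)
The claim is an existential one: we must exhibit a single voting instance $\mathcal{I}$ with $n=4$, an acyclic majority graph, a topologically sorted ranking $\tau$, and a ranking $\pi$ that is \emph{not} topologically sorted but is nevertheless more popular than $\tau$ in the absolute sense (hence, by Theorem~\ref{atmost5}, also in the simple sense). The plan is simply to construct such an instance explicitly and verify the two required properties by direct calculation of Kendall distances. In fact, Example~\ref{kemeny_notwp} already essentially does this work for us, so I would reuse it: there $\pi_{v_4}=[1,2],[3,4],[5,6]$ is a Kemeny consensus (hence topologically sorted, the majority graph being acyclic with all pairs $a<b$ either tied or won by $a$ among the four listed voters), yet $\sigma=[2,1],[4,3],[6,5]$ is more popular than $\pi_{v_4}$ in the absolute sense. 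The point is then to observe that $\sigma$ is \emph{not} topologically sorted — indeed, $\sigma$ reverses every pair that has a strict majority, so it is as far from topologically sorted as possible — which is exactly what the observation asks for.

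\textbf{Step 1.} Take $\mathcal{I}$ to be the instance of Example~\ref{kemeny_notwp}, with the four voters $\pi_{v_1},\dots,\pi_{v_4}$ listed there and six candidates. Verify that its majority graph is acyclic: for each pair $\{a,b\}$ with $a<b$ in the block structure, count $|V(a,b)|$ versus $|V(b,a)|$ and check that no strict majority ever points "backwards," so the identity-like ordering $[1,2],[3,4],[5,6]=\pi_{v_4}$ is topologically sorted. (The three pairs $\{1,2\},\{3,4\},\{5,6\}$ are each split $2$–$2$, and every cross-block pair is unanimous, so the majority graph has arcs only within nothing and a transitive cross-block structure — acyclic.)

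\textbf{Step 2.} Set $\tau:=\pi_{v_4}$ and $\pi:=\sigma=[2,1],[4,3],[6,5]$. Compute $K(\pi_{v_i},\tau)$ and $K(\pi_{v_i},\sigma)$ for $i=1,2,3,4$; Example~\ref{kemeny_notwp} asserts (and a short computation confirms) that a strict majority — in fact three — of the four voters have $K(\pi_{v_i},\sigma)<K(\pi_{v_i},\tau)$, so $\sigma$ is more popular than $\tau$ in the absolute sense.

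\textbf{Step 3.} Observe that $\sigma$ is not topologically sorted: although no pair $\{a,b\}$ has a \emph{strict} majority that $\sigma$ violates here, one should instead pick the witness instance so that $\sigma$ genuinely reverses a strict-majority pair. The cleanest fix is to perturb Example~\ref{kemeny_notwp} slightly — e.g. break one of the $2$–$2$ ties, say make three voters prefer $1$ to $2$ — so that the arc $1\to2$ is in the majority graph, $\tau$ is still topologically sorted, $\sigma$ (which puts $2$ before $1$) is not, and the majority that prefers $\sigma$ to $\tau$ is preserved. Then conclude via Theorem~\ref{atmost5} that $\sigma$ is also simply more popular than $\tau$. \qed

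\emph{Main obstacle.} The only real subtlety is bookkeeping: one must choose the instance so that three properties hold \emph{simultaneously} — the majority graph stays acyclic, $\tau$ remains topologically sorted, and $\sigma$ is both non-topologically-sorted and still beats $\tau$ by an absolute majority. Since breaking a tie can both create a backward arc (good, it makes $\sigma$ non-topological) and shift a voter's distances, one has to re-verify the majority count after the perturbation; there are no conceptual difficulties, only a finite check over four voters and a handful of candidate pairs.
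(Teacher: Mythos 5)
Your overall strategy (exhibit an explicit instance and verify by Kendall-distance bookkeeping) is exactly the paper's, but your concrete witness does not work, and the gap is precisely in the step you dismiss as routine. You correctly notice in Step~3 that in Example~\ref{kemeny_notwp} every within-block pair is split $2$--$2$, so $\sigma=[2,1],[4,3],[6,5]$ violates no strict-majority arc and is itself topologically sorted; the example as given is therefore not a witness. The trouble is that your proposed repair fails too. If you break the $\{1,2\}$ tie by converting one of the voters who prefers $2$ to $1$ (say $v_1$) into preferring $1$ to $2$, that voter's disagreement with $\tau=\pi_{v_4}$ drops from two blocks to one while its disagreement with $\sigma$ rises from one block to two, so $v_1$ now strictly prefers $\tau$ to $\sigma$. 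Since $v_4$ already prefers $\tau$ (its own ranking, at distance $0$), only $v_2$ and $v_3$ are left preferring $\sigma$ --- two out of four, not an absolute majority. The same accounting problem arises whichever of the two "$2$ before $1$" voters you flip, and flipping a tie in a different block either leaves $\sigma$ topologically sorted or forces $\tau$ to change. So with only three blocks of two candidates there is no slight perturbation of Example~\ref{kemeny_notwp} that satisfies all three requirements simultaneously.

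The paper's proof resolves exactly this tension by enlarging the instance to $m=10$ candidates in five blocks: $\tau=\pi_{v_1}$ is the all-increasing ranking, each of $v_2,v_3,v_4$ reverses three of the five blocks (so $K(\pi_{v_i},\tau)=3>2=K(\pi_{v_i},\sigma)$ for the all-decreasing $\sigma$), and the block $\{1,2\}$ is arranged so that three voters prefer $1$ to $2$, making $\sigma$ non-topologically-sorted while the four remaining blocks stay tied $2$--$2$ and the majority graph stays acyclic. To complete your proof you would need to carry out an analogous enlargement; the six-candidate instance cannot be patched as you describe.
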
     
\begin{proof}
Consider the following profile with $n=4$ and $m=10$.
\begin{align*}
\pi_{v_{1}} & =[1,2],[3,4],[5,6],[7,8],[9,10] \\ 
\pi_{v_{2}} & =[1,2],[4,3],[6,5],[7,8],[10,9] \\
\pi_{v_{3}} & =[1,2],[4,3],[6,5],[8,7],[9,10] \\
\pi_{v_{4}} & =[2,1],[3,4],[5,6],[8,7],[10,9]
\end{align*}
    It is easy to verify that $\pi_{v_{1}}$ is a topologically sorted ranking of~$P$. However, $\sigma=[2,1],[4,3],[6,5],[8,7],[10,9]$ is preferred by $v_{2}$, $v_{3}$, and $v_{4}$ to $\pi_{v_1}$,
    since $K(\pi_{v_{i}},\pi_{v_{1}})=3$ and  $K(\pi_{v_{i}},\sigma)=2$ for $2\leq i \leq 4$.
    Since \dmrev{an absolute} majority of voters prefer candidate~$1$ to candidate~$2$, $\sigma$ is not topologically sorted. For the sake of completeness, we remark that the topologically sorted ranking $[1,2],[4,3],[6,5],[8,7],[10,9]$ is popular.
    \qed\end{proof}   

We now discuss \dmrev{a family of strongly related decision problem\dmrev{s} called $k$-\textsc{all-closer-ranking}}, which will come useful when establishing results for the cases $k=4$ and $k=5$. For a \dmrev{profile} with $k$ voters and a given ranking $\pi$, we ask whether there is a ranking that all the voters prefer to~$\pi$.
\begin{tcolorbox}

\textbf{$k$-\fauxsc{all-closer-ranking}}\\
\textbf{Input}:  A set $C$, a profile $P=(\pi_{v_1},\ldots,\pi_{v_k})$ over $C$ and a ranking $\pi$ over $C$.\\
\textbf{Question}: Does there exist a ranking $\sigma$ that is preferred to $\pi$ by each of the $k$ voters?
\end{tcolorbox}

The next theorem reveals some features of this problem.
\begin{theorem}
\label{th:3acr}
Given a profile $P=(\pi_{v_1},\pi_{v_2},\pi_{v_3})$ with an acyclic majority graph, we can decide in polynomial time if there exists a ranking preferred by all voters to a given ranking~$\pi$ and if it does, output it.
\end{theorem}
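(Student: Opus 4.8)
\textbf{Proof proposal for Theorem~\ref{th:3acr}.}

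The plan is to reduce the problem to finding a Kemeny consensus, or more precisely, to a polynomially-solvable structural characterization of when a ``common improvement'' exists. The key object is the set $D$ of pairs of candidates on which the three voters do not all agree; for pairs outside $D$ no ranking can ever do better than $\pi$ does, so any $\sigma$ that all three voters strictly prefer to $\pi$ must improve on $\pi$ strictly within $D$ for each voter simultaneously. First I would observe, as in the $n=2$ analysis (Lemma~\ref{cl:twovoters}) and the $n=3$ analysis (Lemma~\ref{3voters}), that with three voters each pair $\{a,b\}\in D$ is ``split'' $2$ versus $1$: two voters agree on one order and the third disagrees. This partitions $D$ into three classes $D_1, D_2, D_3$, where $D_i$ is the set of pairs on which voter $v_i$ is the odd one out. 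Crucially, since the majority graph is acyclic, the majority order on $D$ is consistent, so there is a topologically sorted ranking $\tau$; for a pair in $D_i$, $\tau$ orders it the way the two voters other than $v_i$ do.

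The next step is to express the quantity ``does $\sigma$ improve on $\pi$ for voter $v_i$'' in terms of which pairs of $D$ the ranking $\sigma$ flips relative to $\pi$. Writing, for a candidate pair $p\in D$, the indicator of whether $\sigma$ orders $p$ the ``majority'' way or the ``minority'' way, one gets that $K(\pi_{v_i},\sigma) - K(\pi_{v_i},\pi)$ is a sum over $p\in D$ of $\pm 1$ contributions plus a nonnegative penalty coming from pairs \emph{outside} $D$ that $\sigma$ gets wrong (which we should clearly avoid), so WLOG we restrict attention to rankings $\sigma$ that, like $\pi$, are ``correct'' on all non-$D$ pairs — this is exactly the preservation-type reduction of Lemma~\ref{pres} applied to the blocks induced by the non-$D$ structure, guaranteeing we may assume $\sigma$ orders every non-$D$ pair the unique way all voters do. Under this reduction, the problem becomes purely combinatorial on $D$: choose an acyclic orientation (equivalently, a linear extension realizable as a ranking) of the pairs in $D$ so that for each $i$, the number of pairs in $D_i$ flipped away from $\tau$ minus the number of pairs flipped toward $\tau$ is strictly positive, i.e.\ each voter sees a net strict gain.

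From here the approach is to show this reduces to something polynomially checkable. Because the majority graph is acyclic, one can process the candidates in topological order; the real content is that whether a single pair can be flipped is not independent — flipping a set $S\subseteq D$ of pairs yields a valid ranking iff the resulting tournament-plus-orientation is acyclic, which is a transitivity constraint. I would handle this by arguing that it suffices to consider flipping only ``consecutive'' pairs (adjacent transpositions along a bubble-swap path), as in the proof of Theorem~\ref{tightc} and Theorem~\ref{condeq}: any achievable profile of gains is achievable by a sequence of adjacent swaps, so the existence question reduces to a reachability/shortest-path computation on the polynomially-sized state space of ``current ranking restricted to $D$,'' or better, to a direct accounting: since $|D_i|$ pairs are each worth $+1$ if flipped toward the voter's own preference and $-1$ if flipped away, and voter $v_i$ starts $|D_i|$ ``behind'' $\tau$ but $\pi$ itself is some fixed ranking, one computes the exact gain of each voter as a function of the chosen $\sigma$ and checks feasibility of the three simultaneous strict inequalities. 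The main obstacle I anticipate is precisely this acyclicity coupling: ensuring the chosen flips correspond to an actual ranking rather than an arbitrary $\pm$-labeling of $D$. I expect the resolution is that the three-class structure $D_1, D_2, D_3$ together with acyclicity forces enough independence — e.g., one shows that flipping \emph{all} of $D_i$ for the appropriate $i$ (the ``worst-off'' voter) and leaving the rest at $\tau$ is always a valid ranking and gives that voter a gain while costing the other two at most their slack — so that a small number of candidate rankings $\sigma$ (polynomially many, perhaps even a constant number of canonical ones built from $\tau$ by reversing one class $D_i$) need to be tried, and the polynomial-time algorithm simply tests each and outputs one that works.
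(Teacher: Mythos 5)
Your proposal correctly identifies useful structure---the 2-vs-1 split of the disagreement pairs into classes $D_1,D_2,D_3$ and the topologically sorted ranking $\tau$ as an anchor---but the decisive step is missing, and the specific fix you say you ``expect'' to work does not. Reversing all of $D_i$ relative to $\tau$ produces exactly $\pi_{v_i}$ (since $v_i$ agrees with the majority on every pair outside $D_i$), and the other two voters may well prefer $\pi$ to $\pi_{v_i}$: the cost to them of reversing the \emph{whole} class can exceed their slack, so trying a constant number of such canonical rankings is not a correct algorithm. What is needed is a \emph{partial} reversal calibrated to the deficits. The paper's proof does exactly this: letting $d_i$ denote how much closer (or farther) $\tau$ is to $\pi_{v_i}$ than $\pi$ is, it shows---in the only nontrivial case, where exactly two voters, say $v_1,v_2$, prefer $\tau$ to $\pi$---that $d_1-d_3\ge 2$ and $d_2-d_3\ge 2$ whenever a common improvement exists (using that on the bubble-swap path from $\pi$ to $\tau$ no swap is bad for both $v_1$ and $v_3$, while at least one is good for both), and then performs exactly $d_3+1$ adjacent swaps good for $v_3$ starting from $\tau$. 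Each such swap costs $v_1$ and $v_2$ at most $1$, so the slack of $2$ guarantees all three voters strictly prefer the resulting ranking to $\pi$. Your accounting ``checks feasibility of three simultaneous inequalities'' but never produces a ranking witnessing feasibility, which is the whole difficulty you yourself flagged.

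Two further gaps. First, the reduction ``WLOG $\sigma$ agrees with every pair on which all voters agree'' is not an instance of Lemma~\ref{pres}: that lemma requires a partition of the candidates into blocks that every voter ranks contiguously and in the same block order, whereas the unanimous-agreement relation is only a partial order and in general induces no such block structure. Second, you give no procedure for certifying non-existence. The paper handles this by a case analysis on how many voters prefer $\tau$ to $\pi$: if none do, then $\pi$ has the same Kemeny rank as the Kemeny consensus $\tau$, hence is itself a Kemeny consensus and no ranking is preferred to it by all voters; and the case where exactly one voter prefers $\tau$ is shown to be impossible. Without these cases your algorithm cannot answer ``no.''
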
 
\begin{proof}
\dmrev{We start with two technical observations that will come \dmrev{in} handy later in our proof.}

\begin{observation}\label{obschange}
If $K(\sigma,\dmrev{\pi_{v_i}})>0$| for a voter $\dmrev{v_i, 1 \leq i \leq 3}$ and a ranking $\sigma$, then there exists a swap in $\sigma$ that is good for~\dmrev{$v_i$}.
\end{observation}

\begin{proof}
Suppose there is no swap in $\sigma$ that is good for~\dmrev{$v_i$}. So $\sigma$ is a topologically sorted ranking for \dmrev{$v_i$}, and for one voter this means $\dmrev{\pi_{v_i}}=\sigma$, i.e.\ $K(\sigma,\dmrev{\pi_{v_i}})=0$.
\myqed\end{proof}

\begin{observation}\label{obstwo}
\dmrev{Let $\pi$ be a ranking such that there is no swap in $\pi$ that is good for both $v_{i}$ and $v_{j}$, where $1 \leq i,j \leq 3$.
}
\dmrev{Then there is no ranking $\sigma$ preferred to $\pi$ by both $v_{i}$ and $v_{j}$.}
\end{observation}

\begin{proof}
Since there is no swap in $\pi$ that is good for both \dmrev{$v_i$} and \dmrev{$v_j$}, $\pi$ is a topologically sorted ranking for \dmrev{$v_i$} and \dmrev{$v_j$}. By Lemma~\ref{cl:twovoters} this means that $\pi$ is weakly popular \dmrev{for the sub-profile comprising $v_i$ and $v_j$}, 
so there exists no ranking preferred by \dmrev{an absolute} majority, that is, preferred by both voters.
\myqed\end{proof}

We are now ready to proceed to the \dmrev{main part of the proof.} 
\dmrev{First, note that we can check in polynomial time whether $\pi$ is topologically sorted for any two of the voters and hence by Lemma \ref{cl:twovoters} whether there is a ranking that they both prefer. Clearly if \dmrev{there is a pair of voters such that no ranking exists that is preferred to $\pi$ by both voters, }
then there is no ranking preferred  to $\pi$ by all three voters. So we may assume that for any two of the three voters there is a ranking they both prefer to~$\pi$.}

Second, note that since the number of voters is odd, the majority graph is a tournament. First we compute the unique topologically sorted ranking $\sigma$ of $P$ in polynomial time~\cite{Kah62}. We distinguish four cases, based on how many of the three voters prefer $\sigma$ to $\pi$, which can be checked in polynomial time.

\dmrev{\textbf{Case 1}}: If $\sigma$ is preferred to $\pi$ by all \dmrev{three} voters, then we are done.

\dmrev{\textbf{Case 2}}: Suppose that two of the voters, \dmrev{without loss of generality} $v_{1}$ and $v_{2}$, prefer $\sigma$ to $\pi$. \dmrev{Let $d_i=K(\pi,\pi_{v_i})-K(\sigma,\pi_{v_i})$ for $i\in \{1,2\}$.  Then $d_i\geq 1$ for $i\in \{1,2\}$.  Without loss of generality assume that $d_1\leq d_2$.  Also let $d_{3}=K(\sigma,\pi_{v_{3}})-K(\pi,\pi_{v_{3}})$.  Then $d_3\geq 0$.}
Let $\pi_{0}:=\pi,\ldots, \pi_{k}:=\sigma$ be the bubble sort swap path from $\pi$ to $\sigma$. 
    
    \begin{myclaim}\label{claim9}
    For the above defined distances, $d_{1}-d_{3} \geq 2$ and similarly, $d_{2}-d_{3} \geq 2$ hold.
    \end{myclaim}
    \begin{proof}
        Firstly, no swap in the bubble sort path is bad for both $v_{1}$ and $v_{3}$, since every swap $\pi_{i}\rightarrow\pi_{i+1}$ that is bad for $v_{3}$ must be good for both $v_{2}$ and $v_{1}$, because $\sigma$ is the topologically sorted ranking of~$P$.  If there is no swap that is good for both $v_{1}$ and $v_{3}$, then there cannot exist a ranking preferred to $\pi$ by all voters, since there cannot exist a ranking preferred by both $v_{1}$ and $v_{3}$ by Observation~\ref{obstwo}. So there is at least one swap in the bubble sort path that is good for both $v_{1}$ and~$v_{3}$. This swap adds $1$ to $d_1$ and subtracts $1$ from $d_3$, i.e.\ it adds $1$ to $-d_3$. By the previous argument, any other swap is good for at least one of $v_1$ and $v_3$, adding at least $0$ to $d_{1}-d_{3}$. 
        It follows that $d_{1}-d_{3} \geq 2$ and since $d_{2}\geq d_{1}$, it follows that the same argument also implies $d_{2}-d_{3} \geq 2$.
    \myqed\end{proof} 
    We now show how to transform $\sigma$ to a ranking that is preferred by all three voters to~$\pi$ if and only if such a ranking exists.\\
    
    \noindent \textbf{Procedure}\\
    Let $\sigma_{0}=\sigma$. In the $i$th round we search for a swap in $\sigma_{i-1}$ that is good for $v_{3}$ and if found, perform the swap to obtain $\sigma_{i}$ for $i \geq 1$. Otherwise we output an error message. We stop the procedure in round $i=d_{3}+1$ and output~$\sigma_{d_{3}+1}$.
    
    \begin{myclaim}
        If the \dmrev{procedure} terminates outputting $\sigma_{d_{3}+1}$, then $v_{1},v_{2}$, and $v_{3}$ prefer $\sigma_{d_{3}+1}$ to~$\pi$. Otherwise, there does not exist a ranking preferred by all voters to~$\pi$.
    \end{myclaim}
    \begin{proof}

The \dmrev{procedure} terminates before reaching $\sigma_{d_{3}+1}$ if and only if $K(\sigma_j,\pi_{v_3})=0$ for some integer~$j < d_{3}+1$, otherwise by Observation~\ref{obschange}, we can find a swap that is good for~$v_{3}$.\\
If the \dmrev{procedure} terminates before reaching $\sigma_{d_{3}+1}$, necessarily $K(\pi_{v_{3}},\sigma)\leq d_{3}$. Since $K(\pi_{v_{3}},\sigma)-K(\pi_{v_{3}},\pi)=d_{3}$, $K(\pi_{v_{3}},\pi)=0$ i.e.\ $\pi_{v_{3}}=\pi$. So clearly there cannot exist a ranking preferred by all voters, including $v_3$, to~$\pi$.
If we successfully obtain $\sigma_{d_{3}+1}$, it will be at most $d_{3}+1$ swaps away from~$\sigma$. So $\sigma_{d_{3}+1}$ is closer to $\pi_{v_{1}}$ than $\pi$ by
   \begin{eqnarray}
   d_{1}' & := & K(\pi,\pi_{v_{1}})-K(\sigma_{d_{3}+1},\pi_{v_{1}}) \nonumber \\ 
          & \geq & K(\pi,\pi_{v_{1}})-K(\sigma,\pi_{v_{1}})-(d_{3}+1) \nonumber\\
          & = & d_{1}-d_{3}-1 \nonumber \\
          & > & 0 \label{eq:7}
   \end{eqnarray}
where inequality in Line~\ref{eq:7} follows from Claim~\ref{claim9}.
   Similarly $$d_{2}':=K(\pi,\pi_{v_{2}})-K(\sigma_{d_{3}+1},\pi_{v_{2}})>0,$$ that is, $\pi_{v_{2}}$ is closer to $\sigma_{d_{3}+1}$ than to $\pi$.
   
   Also, by construction
    \begin{eqnarray*}
    d_{3}' & := & K(\pi,\pi_{v_{3}})-K(\sigma_{d_{3}+1},\pi_{v_{3}})\\
           & = & K(\pi,\pi_{v_{3}})-K(\sigma,\pi_{v_{3}})+K(\sigma,\pi_{v_{3}})-K(\sigma_{d_{3}+1},\pi_{v_{3}})\\
           & = & -d_{3}+d_{3}+1 \\
           & = & 1.
    \end{eqnarray*}
   So $\pi_{v_{3}}$ is closer to $\sigma_{d_{3}+1}$ than to~$\pi$. That is, all of $v_{1},v_{2}$ and $v_{3}$ prefer to $\sigma_{d_{3}+1}$ to~$\pi$.
   \myqed\end{proof}

\dmrev{\textbf{Case 3}}: Suppose that only one voter, without loss of generality $v_{1}$, prefers $\sigma$ to $\pi$. We show that this case cannot occur. There exists a bubble sort swap on the path from $\pi$ to $\sigma$ that is good for both $v_{2}$ and $v_{3}$, else there cannot be a ranking preferred by all voters by Observation~\ref{obstwo}. Since every bubble sort swap is good for at least one of $v_{2}$ and $v_{3}$, without loss of generality, let $v_{2}$ be the voter for whom at least half of the bubble sort swaps are good. This means that $v_{2}$ has more good swaps on the path than bad swaps, i.e.\ $v_{2}$ also prefers $\sigma$ to $\pi$, a contradiction to $v_1$ being the only voter who prefers $\sigma$ to~$\pi$.

\dmrev{\textbf{Case 4}:} Finally, suppose that no voter prefers $\sigma$ to $\pi$, i.e.\ $K(\pi_{v_{i}},\sigma)\geq K(\pi_{v_{i}},\pi)$ for all $1\leq i \leq 3$. Since $\sigma$ is topologically sorted and hence a Kemeny consensus (see Observation~\ref{fact2}), $\sum_{i=1}^{3} K(\pi_{v_{i}},\sigma)\leq \sum_{i=1}^{3} K( \pi_{v_{i}},\pi)$ holds. From these two inequalities follows that $K(\pi_{v_{i}},\sigma)= K(\pi_{v_{i}},\pi)$ for all $1\leq i \leq 3$. That is, $\pi$ is also a Kemeny consensus, hence there does not exist a ranking preferred to $\pi$ by all voters.

Having discussed all four cases, we now can output a ranking preferred by all the voters to a given ranking $\pi$ \dmrev{or report that no such ranking exists} 
in polynomial time.
\qed\end{proof}
\begin{lemma}\label{eqevenodd1}
For $k \geq 3$, if at least one of $(2k-1)$-\textsc{surv}, $(2k-1)$-\textsc{wurv}, and $(2k-2)$-\textsc{wurv} is polynomial-time solvable, then $k$-\textsc{all-closer-ranking} is polynomial-time solvable.
\end{lemma}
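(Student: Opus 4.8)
The plan is to give three many-one (Karp) reductions, one from $n$-\textsc{all-closer-ranking} to each of $(2n-1)$-\textsc{aurv}, $(2n-1)$-\textsc{surv}, and $(2n-2)$-\textsc{aurv}; a polynomial-time algorithm for any one of these problems then yields one for $n$-\textsc{all-closer-ranking}. All three reductions use the same padding trick: given an instance $(\mathcal{I},\pi)$ of $n$-\textsc{all-closer-ranking} with voter set $\{v_1,\dots,v_n\}$, build a new instance $\mathcal{I}'$ on the same candidate set by adjoining $d$ \emph{dummy voters}, each submitting the ranking $\pi$ itself, and keeping $\pi$ as the input ranking. We take $d=n-1$ for the two $(2n-1)$-voter targets and $d=n-2$ for the $(2n-2)$-voter target; note $d\ge 1$ since $n\ge 3$, and the construction is clearly polynomial. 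Since no ranking can be preferred by anyone to itself, throughout the analysis we may restrict attention to rankings $\sigma\neq\pi$.

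First I would record the crucial property of dummy voters: for any $\sigma\neq\pi$ a dummy voter $v$ has $K(\pi_v,\sigma)=K(\pi,\sigma)>0=K(\pi,\pi)=K(\pi_v,\pi)$, so $v$ strictly prefers $\pi$ to $\sigma$, and in particular $v$ is never an abstaining voter between $\pi$ and $\sigma$. Consequently, in $\mathcal{I}'$ the number of voters preferring a given $\sigma\neq\pi$ to $\pi$ equals the number $k$ of original voters who do so, while the $d$ dummies all strictly prefer $\pi$.

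Next I would carry out the (elementary) counting. For $(2n-1)$-\textsc{aurv} with $d=n-1$: $\sigma$ is more popular than $\pi$ in the absolute sense in $\mathcal{I}'$ iff $k$ is an absolute majority of the $2n-1$ voters, i.e.\ $k\ge n$; since $k\le n$ always, this is equivalent to $k=n$, i.e.\ to all $n$ original voters preferring $\sigma$ to $\pi$. For $(2n-2)$-\textsc{aurv} with $d=n-2$, an absolute majority of $2n-2$ voters is again $\ge n$, so the same argument applies verbatim. For $(2n-1)$-\textsc{surv} with $d=n-1$: let $a$ be the number of original voters abstaining between $\pi$ and $\sigma$; then the non-abstaining voters in $\mathcal{I}'$ number $(2n-1)-a$, of which exactly $k$ prefer $\sigma$, so $\sigma$ is more popular than $\pi$ in the simple sense iff $2k>(2n-1)-a$, i.e.\ $2k+a\ge 2n$. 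Since $k$ and $a$ count disjoint subsets of the $n$ original voters we have $k+a\le n$, hence $2k+a=k+(k+a)\le k+n\le 2n$, with equality forcing $k=n$ (and then $a=0$). So once more the condition is exactly that all $n$ original voters prefer $\sigma$ to $\pi$.

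Combining, in each of the three cases $\pi$ is popular (absolutely, resp.\ simply) in $\mathcal{I}'$ if and only if $(\mathcal{I},\pi)$ is a \textbf{no}-instance of $n$-\textsc{all-closer-ranking}, which is precisely the reduction claimed. The construction itself is trivial; the only point requiring care is choosing the right number $d$ of dummies so that the majority threshold in the target problem lines up exactly with ``all $n$ voters,'' and, in the simple-majority case, verifying that abstentions among the original voters cannot let a strictly smaller coalition cross the threshold — this is exactly the inequality $2k+a\le 2n$ above, which is also what makes $(2n-2)$-\textsc{surv} fail to fit this scheme (there a coalition of $n-1$ voters with one abstainer suffices, which is a strictly weaker condition).
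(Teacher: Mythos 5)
Your proposal is correct and uses exactly the paper's construction: pad the $n$-\textsc{all-closer-ranking} instance with $n-2$ (resp.\ $n-1$) dummy voters whose ranking is $\pi$ itself, so that a (simple or absolute) majority preferring some $\sigma$ to $\pi$ must consist of all $n$ original voters. Your explicit abstention count $2k+a\le 2n$ for the \textsc{surv} case is more careful than the paper's one-line remark (``add another voter \ldots and otherwise keep the proof intact''), and your aside on why $(2n-2)$-\textsc{surv} does not fit the scheme correctly explains its absence from the lemma, but the argument is essentially identical.
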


\begin{proof}
Assume first that $(2k-2)$-\dmrev{\textsc{wurv}} \dmrev{is polynomial-time solvable. }
Consider an instance of $k$-\textsc{all-closer-ranking} with input ranking $\pi$ and profile $(\pi_{v_{1}},\ldots,\pi_{v_{k}})$ over $C$.
From this instance of $k$-\textsc{all-closer-ranking} we construct the following instance of $(2k-2)$-\dmrev{\textsc{wurv}}. We copy $\pi$ as the given input ranking and create $2k-2$ voters, $k-2$ of them with ranking $\pi$ and the other $k$ voters corresponding to $v_{1},\ldots,v_{k}$.
Since voters with ranking $\pi$ clearly prefer $\pi$ to any other ranking, if there exists a ranking preferred by \dmrev{an absolute} majority (at least $k$) of the $2k-2$ voters to $\pi$, then these $k$ voters must be $v_{1},\ldots,v_{k}$. If a ranking is preferred by \dmrev{an absolute} majority of the $2k-2$ voters to $\pi$, then it is a solution to $k$-\textsc{all-closer-ranking}. Hence there is a ranking $\sigma$ preferred by \dmrev{an absolute} majority of the $2k-2$ voters
if and only if $\sigma$ is a solution to $k$-\textsc{all-closer-ranking}.
For $(2k-1)$-\dmrev{\textsc{wurv}} and $(2k-1)$-\textsc{surv} we simply add another voter with ranking $\pi$, and otherwise keep the proof intact.
\qed\end{proof}

\begin{lemma}\label{eqevenodd2}Let $k\geq 3$ be a constant. 
If $k$-\textsc{all-closer-ranking} is polynomial-time solvable, then $(2k-1)$-\dmrev{\textsc{wurv}} and $(2k-2)$-\dmrev{\textsc{wurv}} are both polynomial-time solvable.
\end{lemma}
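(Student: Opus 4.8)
The plan is to solve $(2n-1)$-\textsc{aurv} and $(2n-2)$-\textsc{aurv} by making a constant number of calls to the assumed polynomial-time algorithm for $n$-\textsc{all-closer-ranking}, exploiting that $n$ is a fixed constant. The guiding observation is that a ranking $\sigma$ is preferred to $\pi$ by a majority of a $(2n-1)$-voter (respectively $(2n-2)$-voter) instance precisely when at least $n$ voters prefer $\sigma$ to $\pi$, since $\lfloor (2n-1)/2 \rfloor + 1 = \lfloor (2n-2)/2 \rfloor + 1 = n$. Hence the existence of a majority-preferred ranking is equivalent to the existence of an $n$-element voter subset $S$ together with a ranking that every voter of $S$ prefers to $\pi$ --- and the latter is exactly what $n$-\textsc{all-closer-ranking} decides, on the instance formed by the voters in $S$.

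Concretely, I would first prove this equivalence. For one direction, if $\sigma$ is preferred to $\pi$ by a majority, take $S$ to be any $n$ of the voters preferring $\sigma$ to $\pi$; then $\sigma$ witnesses a ``yes'' answer for $n$-\textsc{all-closer-ranking} on the instance formed by $S$ with input ranking $\pi$. For the converse, if for some $n$-element set $S$ there is a ranking $\sigma$ that all voters of $S$ prefer to $\pi$, then $\sigma$ is preferred by at least $n$ voters of the whole instance, hence by a majority, so $\pi$ is not absolutely popular. The algorithm then is: enumerate all $\binom{2n-1}{n}$ (respectively $\binom{2n-2}{n}$) subsets $S$ of the voter set of size $n$; for each, run the $n$-\textsc{all-closer-ranking} algorithm on the instance formed by $S$ and the ranking $\pi$; report that $\pi$ is not absolutely popular if and only if at least one call returns ``yes'' (and output that call's witness $\sigma$ if a solution is desired). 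Each instance passed to the oracle is a legitimate $n$-voter voting instance on the same candidate set, there are only constantly many of them because $n$ is fixed, and each call runs in polynomial time; so the whole procedure runs in polynomial time.

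I do not expect a serious obstacle: this is essentially a bounded subset-enumeration Turing reduction. The parts that need care are bookkeeping --- checking that ``majority'' unwinds to ``$\geq n$ voters'' in both the odd size $2n-1$ and the even size $2n-2$, so that $n$-element subsets are exactly the right objects to enumerate, and observing that this is a Turing-style reduction using a constant number of oracle calls rather than a single Karp reduction (in contrast to the reductions in Lemma~\ref{eqevenodd1}). It is also worth noting why the same trick is not claimed for $(2n-1)$-\textsc{surv}: there a majority is counted among the non-abstaining voters, a set that itself depends on $\sigma$, so fixing an $n$-voter subset in advance no longer captures the simple-majority condition.
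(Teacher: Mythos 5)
Your proposal is correct and is essentially the paper's own argument: enumerate all $\binom{2n-2}{n}$ (resp.\ $\binom{2n-1}{n}$) voter subsets of size $n$, call the assumed algorithm for $n$-\textsc{all-closer-ranking} on each, and answer yes iff some call does, using that a majority in either instance size means at least $n$ voters. Your explicit check that the majority threshold is $n$ in both the odd and even cases is a detail the paper leaves implicit, but the reduction is the same.
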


\begin{proof}
If $k$-\textsc{all-closer-ranking} has a polynomial-time algorithm $A$, then we can solve $(2k-2)$-\textsc{wurv} by applying $A$ to each of the $\binom{2k-2}{k}$ voter groups of size $k$, which itself is a polynomial-time procedure if $k$ is a constant. If one of the calls to $A$ returns yes, return yes, else return no. It is easy to see that this procedure returns yes if and only if there is some group of $k$ voters that prefers another ranking, i.e.\ if and only if there is \dmrev{an absolute} majority that prefers another ranking. A similar argument can be applied for $(2k-1)$-\dmrev{\textsc{wurv}}.
\qed\end{proof}

An immediate consequence of Lemmas~\ref{eqevenodd1} and~\ref{eqevenodd2} is the following result.
 
\begin{theorem} \label{45}
 \dmrev{All of $4$-\textsc{wurv}, $4$-\textsc{surv}, $5$-\textsc{wurv} and  $5$-\textsc{surv}\ are polynomial-time solvable if and only if any one of them is polynomial-time solvable.}
\end{theorem}

\begin{proof}
With $k=3$ in Lemma~\ref{eqevenodd1}, \dmrev{if} $5$-\dmrev{\textsc{wurv}} \dmrev{ is polynomial-time solvable, then }
$3$-\textsc{all-closer-ranking} \dmrev{is also polynomial-time solvable. }
Due to Lemma~\ref{eqevenodd2}, \dmrev{the polynomial-time solvability of} $3$-\textsc{all-closer-ranking} 
implies \dmrev{the polynomial-time solvability of} $4$-\dmrev{\textsc{wurv}}.
By a similar argument, \dmrev{the polynomial-time solvability of} $4$-\dmrev{\textsc{wurv}} 
implies \dmrev{the polynomial-time solvability of} $5$-\dmrev{\textsc{wurv}} 
By Theorem~\ref{atmost5}, an analogous result holds for $4$-\textsc{surv} and $5$-\textsc{surv}.
\qed\end{proof}

\subsection{$\NP$-hardness for $k=6$}
\label{sec:n6}

We now improve upon the $\NP$-hardness result of~\cite[Theorem 4]{VSW14} on the search version of 7-\textsc{wurv} from seven voters to six voters, and also extend it to strongly popular rankings with \dmrev{six} or \dmrev{seven} voters. 

\begin{theorem}
\label{th:nsurv_npc}
The search versions of $6$-\textsc{wurv}, $6$-\textsc{surv}, 
and $7$-\textsc{surv} are all $\NP$-hard.
\end{theorem}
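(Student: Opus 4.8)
The plan is to reduce from a known $\NP$-hard problem about Kemeny consensus or, more likely, to adapt the construction of van Zuylen et al.\ for $7$-\textsc{aurv}. Their hardness proof for $7$ voters presumably encodes some $\NP$-complete problem (e.g.\ a variant of \textsc{feedback arc set} on tournaments, or minimum Kemeny score with a fixed number of voters) into a voting instance where the existence of a ranking beating a designated $\pi$ by an absolute majority of $7$ voters corresponds to a solution of the combinatorial problem. First I would recall that to beat $\pi$ absolutely with $7$ voters one needs $4$ voters to strictly prefer $\sigma$; the natural way to ``save'' a voter is to pad with copies of $\pi$, but that moves us in the wrong direction. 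Instead, the idea is to take the $7$-voter absolute-unpopularity instance and carefully \emph{merge or remove} one voter so that beating $\pi$ by a simple majority among $6$ voters (i.e.\ using the non-abstaining voters) still captures the same combinatorial condition.

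The key technical device is the relationship between abstention and the parity/structure lemmas already proved: by Claim~\ref{cl:Ceven} and Claim~\ref{cl:preferpi1}, on the bubble swap path between $\pi$ and any competing $\sigma$, a voter who strictly prefers $\sigma$ gains by at least $2$, and abstaining voters split their disagreements evenly. I would exploit this to design a gadget instance in which, for the designated ranking $\pi$, every candidate pair on which a competitor could disagree is ``balanced'' so that exactly three of the six voters abstain on any relevant $\sigma$, leaving the same three active voters that played the decisive role in the $7$-voter construction; then a simple majority of the active voters ($2$ out of $3$) preferring $\sigma$ is equivalent to the original absolute-majority condition ($4$ out of $7$ after re-adding the balancing voter). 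Concretely, I would build on the instance from Figure~\ref{fi:ex} and the block/component structure (Lemma~\ref{pres}) so that the search for a better ranking decomposes over independent components, each component encoding one clause or one arc of the $\NP$-hard source problem.

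For the $7$-\textsc{surv} part, I expect an easy reduction from $6$-\textsc{surv}: add one voter with ranking $\pi$ (who always abstains or prefers $\pi$), which does not change the set of non-abstaining voters in any decisive comparison, so $\pi$ is simply popular in the $7$-voter instance iff it is in the $6$-voter one; alternatively reduce directly from the $7$-\textsc{aurv} hardness using the conditional-equivalence machinery of Theorem~\ref{condeq} and Corollary~\ref{cor:equiv} to transfer hardness between the two popularity notions when the abstaining voters' majority graph is acyclic.

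The main obstacle will be controlling the abstaining set: I must ensure that for \emph{every} candidate ranking $\sigma$ that could conceivably beat $\pi$ (not just the ``intended'' ones), the parity and balance conditions force exactly the desired voters to abstain, so that the simple-majority condition cannot be satisfied by some unintended $\sigma$ that exploits abstentions differently. Controlling this across all exponentially many $\sigma$ is where the combinatorial care lies; I would handle it by invoking Lemma~\ref{pres} to restrict attention to rankings preserving a fixed partition, then checking the finitely many ``local'' competitors within each block, exactly as in the proof of Theorem~\ref{example}.
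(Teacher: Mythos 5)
Your general instinct---adapt the van Zuylen et al.\ construction for $7$-\textsc{aurv} by adjusting the number of voters---is the right one, but you miss the single observation that makes the paper's proof almost immediate, and in its place you propose machinery that is neither needed nor likely to work. The key fact is this: in that construction, $4$ of the $7$ voters have rankings $\pi_1,\dots,\pi_4$ and the other $3$ all have ranking $L(\sigma)$, which is also the designated input ranking. A voter whose own ranking \emph{is} $L(\sigma)$ never abstains between $L(\sigma)$ and any $\zeta\neq L(\sigma)$: she is at Kendall distance $0$ from $L(\sigma)$ and strictly prefers it to everything else. Hence there is nothing to ``balance'' or control about the abstaining set for these voters, and the whole apparatus you invoke (Claim~\ref{cl:Ceven}, Claim~\ref{cl:preferpi1}, exact-three-abstainers gadgets, decomposition via Lemma~\ref{pres}, transfer via Theorem~\ref{condeq}) is beside the point. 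With this observation, $7$-\textsc{surv} on the unmodified instance already forces all $4$ real voters to prefer $\zeta$ (since the $3$ copies of $L(\sigma)$ are non-abstaining votes against $\zeta$), which is exactly the condition van Zuylen et al.\ show $\NP$-hard; $6$-\textsc{aurv} follows by dropping one copy of $L(\sigma)$ (an absolute majority of $6$ is still $4$); and $6$-\textsc{surv} uses the same $6$-voter instance, where the condition becomes ``at least $3$ of the $4$ real voters prefer $\zeta$ and none prefers $L(\sigma)$,'' which the original hardness argument covers without modification. Contrary to your remark, padding with copies of the input ranking does not ``move in the wrong direction''---it is precisely the lever.

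Your $7$-\textsc{surv} argument as written is also flawed: you add a voter with ranking $\pi$ and claim she ``always abstains or prefers $\pi$'' and that this ``does not change the set of non-abstaining voters in any decisive comparison,'' concluding that simple popularity in the $6$- and $7$-voter instances is equivalent. Such a voter never abstains; she is an additional non-abstaining vote for $\pi$ in every comparison, which strictly raises the threshold a challenger must meet (from ``at least $3$ of $4$, none against'' to ``all $4$''). So the biconditional you assert does not hold, and the two cases genuinely require the slightly different counting that the paper spells out. Finally, your worry about controlling abstention across exponentially many competitors $\sigma$ is a non-issue here for the same reason: the copies of $L(\sigma)$ can never be made to abstain, so the reduction inherits its correctness for all $\sigma$ directly from the original $7$-\textsc{aurv} analysis.
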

\begin{proof}
To prove the claim we modify the proof from~\cite[Theorem 4]{VSW14}, which shows that the search version of $7$-\textsc{wurv} is $\NP$-hard. In that proof, \dmrev{four} of the \dmrev{seven} voters have rankings $\pi_{1},\pi_{2},\pi_{3},\pi_{4}$, respectively, and the remaining \dmrev{three} voters have ranking~$L(\sigma)$. The authors (implicitly) prove that it is $\NP$-hard to construct 
a ranking $\zeta$ that all the \dmrev{four} voters with rankings $\pi_{1},\pi_{2},\pi_{3},\pi_{4}$ prefer to~$L(\sigma)$. \dmrev{We use this \dmrev{to} show \dmrev{the} NP-hardness of $6$-\textsc{wurv} and $7$-\textsc{surv}.}

We start with $7$-\textsc{surv}. For any ranking $\zeta \neq L(\sigma)$, 
the three voters with lists $L(\sigma)$ must prefer $L(\sigma)$ to $\zeta$. In order for $\zeta$ to be more popular than $L(\sigma)$ in the simple sense, ranking $\zeta$ must be preferred to $L(\sigma)$ by more than three voters. This happens if and only if all four voters with rankings $\pi_{1},\pi_{2},\pi_{3},\pi_{4}$ prefer $\zeta$ to~$L(\sigma)$. 

For $6$-\dmrev{\textsc{wurv}}, we have two voters with lists $L(\sigma)$ instead of three. The same reduction holds as for $7$-\textsc{surv}, since \dmrev{an absolute} majority of all six voters, that is, the four voters with rankings $\pi_{1},\pi_{2},\pi_{3},\pi_{4}$, must prefer $\zeta$ to~$L(\sigma)$.

\dmrev{To show \dmrev{the} NP-hardness of $6$-\textsc{surv}, we keep the same instance as for $6$-\textsc{surv}.} Now only two voters prefer $L(\sigma)$ to $\zeta$, and thus $\zeta$ is more popular than $L(\sigma)$ in the simple sense if and only if at least three of the remaining four voters prefer $\zeta$ to $L(\sigma)$, and none of these four voters prefer $L(\sigma)$ to~$\zeta$. Even though it is not directly observed by van Zuylen et al., their $\NP$-hardness proof carries over to this case without modification.
\qed\end{proof}

\section{The relationship with the Kemeny consensus}
\label{sec:kemeny}

\dmrev{We next draw attention to  \dmrev{connections} with the complexity of the famous Kemeny consensus problem.}
\dmrev{We show that if either 
of 4-\textsc{wurv}, 5-\textsc{wurv}, 4-\textsc{surv}, and 5-\textsc{surv}
is polynomial-time solvable, 
then one can find a Kemeny consensus for three voters in polynomial time.}
\dmrev{This explains why we only succeeded to prove polynomial-time solvability for special cases of \dmrev{$k$-\textsc{wurv} and $k$-\textsc{surv}} for $k\in \{4,5\}$ \dmrev{in Lemmas \ref{cl:fourvotersa} and \ref{cl:4votersc}}.}


Consider the following problem: we are given a ranking $\pi$ as well as three voters' rankings $\pi_{v_{1}},\pi_{v_{2}}, \pi_{v_{3}}$. Our task is to output a ranking $\sigma$ that has smaller Kemeny rank than~$\pi$, or \dmrev{report} that none exists. \dmrev{In general, with $k$ voters, we} call this search problem $k$-\textsc{smaller-Kemeny-rank}.
\medskip

\begin{tcolorbox}
\textbf{$k$-\fauxsc{smaller-Kemeny-rank}}\\
\textbf{Input}: A set $C$, a profile $P=(\pi_{v_1},\ldots,\pi_{v_k})$ over $C$ and a ranking $\pi$ over $C$.\\
\textbf{Output}: A ranking $\sigma$ with smaller Kemeny rank than $\pi$, that is, $\sum_{i=1}^{k} K(\sigma,\pi_{v_{i}}) < \sum_{i=1}^{k} K(\pi,\pi_{v_{i}})$ or a statement that no such ranking exists.
\end{tcolorbox} 

\begin{theorem}
\label{th:Kr}
A Kemeny consensus for $k$ voters can be computed in polynomial time if and only if $k$-\textsc{smaller-Kemeny-rank} is \dmrev{polynomial-time solvable}.
\end{theorem}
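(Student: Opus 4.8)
The plan is to prove the two directions separately; both are short, and the equivalence is an instance of the familiar self-reducibility phenomenon linking an optimisation problem with its associated ``strict-improvement'' search problem.

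For the forward direction, suppose a Kemeny consensus can be computed in polynomial time. Given an instance of $n$-\textsc{smaller-Kemeny-rank}, namely a voting instance $\mathcal{I}$ and a ranking $\pi$, I would first compute a Kemeny consensus $\sigma^{*}$, and then compare Kemeny ranks: each value $K(\pi,\pi_{v_i})$ and $K(\sigma^{*},\pi_{v_i})$ is computable in polynomial time (it is just an inversion count), so the two sums $\sum_{i=1}^n K(\pi,\pi_{v_i})$ and $\sum_{i=1}^n K(\sigma^{*},\pi_{v_i})$ can be formed and compared in polynomial time. If the latter is strictly smaller, output $\sigma^{*}$; otherwise $\pi$ already attains the minimum Kemeny rank, so no ranking has strictly smaller Kemeny rank, and we report this.

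For the backward direction, suppose $n$-\textsc{smaller-Kemeny-rank} is solved by a polynomial-time algorithm $A$. Starting from an arbitrary ranking $\pi^{(0)}$ (for instance $\pi_{v_1}$), iterate: given $\pi^{(t)}$, run $A$ on the instance $(\mathcal{I},\pi^{(t)})$; if $A$ returns a ranking $\pi^{(t+1)}$ of strictly smaller Kemeny rank, continue; if $A$ reports that none exists, halt and output $\pi^{(t)}$, which is then a Kemeny consensus by definition. Termination and polynomial running time follow because the Kemeny rank of any ranking is a non-negative integer bounded above by $n\binom{m}{2}$, which is polynomial in the size of $\mathcal{I}$; since each successful call to $A$ strictly decreases this integer quantity, there are at most $n\binom{m}{2}$ iterations, each consisting of one polynomial-time call to $A$ together with one polynomial-time Kemeny-rank computation.

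The only point requiring care, and the step I would regard as the crux, is bounding the number of iterations in the backward direction: this rests on the Kendall distance taking integer values and on the total Kemeny rank being polynomially bounded, so that the potential $\sum_{i=1}^n K(\cdot,\pi_{v_i})$ cannot be decreased more than polynomially many times (in particular, integrality guarantees every decrease is by at least one, ruling out the worry that $A$ might shrink the Kemeny rank only by a vanishing amount each time). Everything else is bookkeeping: the reduction is the identity on $\mathcal{I}$, and comparing Kemeny ranks is elementary.
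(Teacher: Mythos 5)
Your proposal is correct and follows essentially the same route as the paper: the forward direction computes a Kemeny consensus and compares Kemeny ranks, and the backward direction iterates the improvement algorithm, bounding the number of calls by the maximum possible Kemeny rank $n\frac{m(m-1)}{2}$, exactly as in the paper's argument. The extra care you take in noting that integrality of the Kendall distance guarantees each improvement decreases the potential by at least one is a sound (if implicit in the paper) observation.
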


\begin{proof}
Assume that $k$-\textsc{smaller-Kemeny-rank} has a polynomial-time algorithm~$A$. We simply choose an arbitrary ranking $\pi_1$ for the Kemeny consensus problem and apply $A$ to find $\pi_2$ with smaller Kemeny rank than $\pi_1$, and continue this way until we have found a Kemeny consensus. The number of calls to $A$ can be naively bounded by $k\frac{m(m-1)}{2}$, which is the maximum Kemeny rank of a ranking. Similarly, if we can find a Kemeny consensus for $k$ voters in polynomial time, then we can check if it has smaller Kemeny rank than $\pi$ in the input of the $k$-\textsc{smaller-Kemeny-rank} problem. 
\qed\end{proof}

By an argument similar to the one in~\cite[Theorem~5]{VSW14}, we prove the following result on the complexity of 3-\textsc{smaller-Kemeny-rank}.
 
\begin{theorem} 
\label{th:Kc}
\dmrev{If the search version of 3-\textsc{all-closer-ranking} is polynomial-time solvable then 3-\textsc{smaller-Kemeny-rank} is polynomial-time solvable.}
\end{theorem}

\begin{proof}
Given an instance $I$ of $3$-\textsc{smaller-Kemeny-rank} with profile $P=(\pi_{v_1},\pi_{v_2},\pi_{v_3})$ over $C=\{c_1,\ldots, c_m\}$ and a ranking $\pi$ over $C$, we create an instance $\mathcal{I}'$ of $3$-\textsc{all-closer-ranking} as follows. We create a set of $3m$ candidates as $C'=C^1\cup C^2 \cup C^3$, where $C^j=\{c_r^j : 1\leq r\leq m\}$ for each $1\leq j \leq 3$ and $C^1=C$ with $c_r^1=c_r$ for $1\leq r \leq m$. Intuitively, $C'$ consists of three distinguishable copies of~$C$. Given any ranking $\sigma$ of the $m$ candidates in $\mathcal{I}$, let $\sigma^j$ be the ranking obtained from $\sigma$ by replacing each candidate $c_r$ by $c_r^j$, preserving the original order in~$\sigma$. Let $\pi^{i}$ be a preference ranking of $C^{i}$. We denote by $\pi^{1}\pi^{2}\pi^{3}$ the ranking of $C'$, in which each candidate in $C^{i}$ is preferred to each candidate in $C^{j}$ whenever $i<j$, and candidates within a set $C^{i}$ are ranked according to~$\pi^{i}$. Now the profile $P'=(\pi_{v_1'},\pi_{v_2'},\pi_{v_3'})$ in $\mathcal{I}'$ is defined as follows.
\begin{align*}
\pi_{v_1'} &= \pi_{v_1}^1 \pi_{v_2}^2 \pi_{v_3}^3\\
\pi_{v_2'} &= \pi_{v_2}^1 \pi_{v_3}^2 \pi_{v_1}^3\\
\pi_{v_3'} &= \pi_{v_3}^1 \pi_{v_1}^2 \pi_{v_2}^3
\end{align*} 
Finally we create ranking $\pi' = \pi^1 \pi^2 \pi^3$ for the input to 3-\textsc{all-closer-ranking}.

\begin{myclaim}
    Ranking $\sigma$ in $\mathcal{I}$ has a smaller Kemeny rank than $\pi$ if and only \dmrev{if there exists a} ranking $\sigma'$ in \dmrev{$\mathcal{I}'$} preferred by all of $v'_{1},v'_{2}$, and $v'_{3}$ to~$\pi'$.
\end{myclaim}

\begin{proof} 
Suppose first that $\sigma$ has a smaller Kemeny rank than $\pi$ in $\mathcal{I}$, that is, $$\dmrev{\sum_{j=1}^3 K(\pi_{v_j},\sigma)<\sum_{j=1}^3 K(\pi_{v_j},\pi).}$$ Let $\sigma'=\sigma^{1}\sigma^{2}\sigma^{3}$. Note that for each $1\leq i\leq 3$, 
$$K(\pi_{v'_{i}},\sigma')=\dmrev{\sum_{j=1}^{3} K(\pi_{v_j},\sigma)<\sum_{j=1}^{3} K(\pi_{v_j},\pi)}=K(\pi_{v'_{i}},\pi').$$
So indeed each $v'_{i}$ for $1\leq i \leq 3$ prefers $\sigma'$ to $\pi'$.

For the converse direction, we first informally summarise the argument. We will argue that if there is a ranking $\sigma'$ in $\mathcal{I}'$ preferred to $\pi'$ by all voters, then we can extract a ranking $\sigma$ in $\mathcal{I}$ with smaller Kemeny rank than $\pi$ in two steps. \dmrev{First of all we can break up $\sigma'$ into three different rankings}, each defined on a different candidate set. Secondly, one of these rankings translated back to our instance $\mathcal{I}$ will be a ranking with a smaller Kemeny rank than $\pi$, as we will argue using the averaging principle. This argument relies on every $\pi_{v_{i}}$, $1 \leq i \leq 3$, appearing once in each ``column'' of $\mathcal{I'}$, hence justifying the cyclic shift used in~$\mathcal{I'}$.

Suppose that $\sigma'$ is preferred to $\pi'$ by all three voters $v'_{1}, v'_{2},v'_{3}$. By Lemma~\ref{pres} we can assume that $\sigma'$ preserves \dmrev{$(C^1,C^2,C^3)$}. So we can also assume that $\sigma'=\zeta_{1}^{1}\zeta_{2}^{2}\zeta_{3}^{3}$, where $\zeta_{j}^{j}$ is a ranking of the candidates in $C^j$ for $1\leq j \leq 3$, that is, these are three different rankings. We let $\zeta_{j}^{l}$ be the ranking that is obtained from $\zeta_{j}^{j}$ by replacing candidate $c_{r}^{j}$ with candidate $c_{r}^{l}$ for $1\leq j, l\leq 3$ and $1\leq r\leq m$, preserving the original order in $\zeta_{j}^{j}$. Let $\tau_{j}=\zeta_{j}^{1}\zeta_{j}^{2}\zeta_{j}^{3}$, so that intuitively, we copy the same ranking three times, on different candidate sets. We will show that for some $1 \leq j\leq 3$, $\tau_j$ is also preferred to $\pi'$ by $v'_{1}$.

Notice that $\sum_{i=1}^{3} K(\sigma',\pi_{v'_{i}})=\sum_{j=1}^{3} K(\tau_{j},\pi_{v'_{1}})$. Since $K(\sigma',\pi_{v'_{i}})<K(\pi',\pi_{v'_{i}})$ for all $1\leq i\leq 3$ and $K(\pi',\pi_{v'_{1}})=K(\pi',\pi_{v'_{2}})=K(\pi',\pi_{v'_{3}})$, it follows that
\begin{equation} \label{eq1}
\sum_{j=1}^{3} K(\tau_{j},\pi_{v'_{1}})=\sum_{i=1}^{3} K(\sigma',\pi_{v'_{i}})<\sum_{i=1}^{3} K(\pi',\pi_{v'_{i}})=3 K(\pi',\pi_{v'_{1}}).
\end{equation}
So there must exist an index $j$, $1\leq j \leq 3$, such that $K(\tau_{j},\pi_{v'_{1}})<K(\pi',\pi_{v'_{1}})$. 
But then 
$$
\sum_{i=1}^3 K(\zeta_{j}^1,\pi_{v_{i}})=\sum_{i=1}^3 K(\zeta_{j}^i,\pi_{v_{i}}^{i})=K(\tau_{j},\pi_{v'_{1}})<
K(\pi',\pi_{v'_{1}})=\sum_{i=1}^3 K(\pi,\pi_{v_{i}}),
$$
which means that $\zeta_{j}^{1}$ has smaller Kemeny rank than $\pi$, as desired. 
\myqed\end{proof}
This finishes the proof of our theorem.
\qed\end{proof}
We observe that a slight tweak to the above proofs lets us show $\NP$-hardness for four problems related to $3$-\textsc{all-closer-ranking}.
\begin{observation}
\label{obs:kemeny}
If finding a ranking with a smaller Kemeny rank than a given ranking $\pi$ for \dmrev{three} voters is $\NP$-hard, then finding a ranking $\zeta$ that exactly one / at least one / exactly two / at least two of the \dmrev{three} voters prefer $\pi$, while no voter prefers $\pi$ to $\zeta$ is also $\NP$-hard.
\end{observation}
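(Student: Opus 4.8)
The plan is to reuse the reduction from the proof of Theorem~\ref{th:Kc} and only change its target. On an input $3$-\textsc{smaller-Kemeny-rank} instance $\mathcal{I}=(\pi_{v_1},\pi_{v_2},\pi_{v_3},\pi)$ that proof builds a tripled, cyclically shifted instance $\mathcal{I}'=(\pi_{v'_1},\pi_{v'_2},\pi_{v'_3},\pi')$ on $C^1\cup C^2\cup C^3$ with $K(\pi',\pi_{v'_1})=K(\pi',\pi_{v'_2})=K(\pi',\pi_{v'_3})$, and shows --- via Lemma~\ref{pres} and an averaging step --- that a ranking $\zeta=\zeta_1^1\zeta_2^2\zeta_3^3$ with $\sum_{i=1}^{3}K(\zeta,\pi_{v'_i})<\sum_{i=1}^{3}K(\pi',\pi_{v'_i})$ exists if and only if $\mathcal{I}$ is a yes-instance (using $\sum_i K(\zeta,\pi_{v'_i})=\sum_j\sum_p K(\zeta_j,\pi_{v_p})$, so that some $\zeta_j$ then has smaller Kemeny rank than $\pi$). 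For the ``at least one'' and ``at least two'' variants this construction works unchanged. In the forward direction the canonical witness $\sigma'=\sigma^1\sigma^2\sigma^3$ is preferred to $\pi'$ by all three voters, hence by at least one and by at least two of them, and no voter prefers $\pi'$; in the backward direction, any $\zeta$ preferred to $\pi'$ by some voter while no voter prefers $\pi'$ to $\zeta$ already satisfies $\sum_i K(\zeta,\pi_{v'_i})<\sum_i K(\pi',\pi_{v'_i})$ (``none against'' gives termwise $\le$, ``at least one'' gives strictness), and the extraction step of Theorem~\ref{th:Kc} turns this into a smaller-Kemeny-rank ranking in $\mathcal{I}$.

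For the ``exactly one'' and ``exactly two'' variants the backward direction still works for precisely this reason, but the forward direction must produce a ranking preferred by \emph{exactly} $k$ of the three voters ($k\in\{1,2\}$), whereas the canonical witness is preferred by all three. I would first normalise $\mathcal{I}$ by the clone trick --- replace each candidate by $t$ clones kept in a common internal order by everybody, with $t$ even --- so that every strict Kemeny improvement becomes a positive multiple of the even number $t^2$; after tripling, the gain $\Delta$ of a witness over $\pi'$ inherits this. To $\mathcal{I}'$ I would then append a gadget $G$ of polynomially many (on the order of the maximum Kemeny rank) fresh two-candidate blocks, ranked by every voter and by $\pi'$ below all of $C^1\cup C^2\cup C^3$, with two design requirements: (i) on each block $\pi'$ takes the \emph{majority} orientation among the three voters, so that $\pi'$ restricted to $G$ is a Kemeny consensus of the three voters restricted to $G$; and (ii) there are enough blocks of each ``two-agree, one-disagree'' pattern. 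In the forward direction, starting from $\sigma'$ and flipping a calibrated bundle of such blocks --- e.g.\ for ``exactly two'' flipping $\Delta/2$ blocks on which $v'_1$ is the lone dissenter together with $\Delta/2$ blocks on which $v'_2$ is the lone dissenter --- changes the block-distances so that the designated $3-k$ voters land exactly on equality (their $C'$-gain $\Delta$ is cancelled by an equal block-loss) while the remaining $k$ voters keep strictly positive net gain and no voter slips below equality; this is possible because flipping a block away from the majority orientation worsens the distance for the two-voter majority by one and improves it for the lone minority voter by one, which gives exactly the degrees of freedom needed to tune the three net gains to $(>0,>0,0)$ or $(>0,0,0)$.

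For the backward direction of the ``exactly'' variants, Lemma~\ref{pres} applied to the partition $\{C^1,C^2,C^3,B_1,\dots,B_N\}$ (preserved by all voters and by $\pi'$) lets us assume a solution has the form $\zeta=\xi\eta$ with $\xi$ a ranking of $C'$ preserving $\{C^1,C^2,C^3\}$ and $\eta$ a choice of orientation per block; since the solution is preferred by at least one voter with none against, $\sum_i K(\zeta,\pi_{v'_i})<\sum_i K(\pi',\pi_{v'_i})$, i.e.\ $\sum_j\sum_p K(\xi_j,\pi_{v_p})+\sum_i K(\eta,\pi_{v'_i}|_G)<3\sum_p K(\pi,\pi_{v_p})+\sum_i K(\pi'|_G,\pi_{v'_i}|_G)$, and because $\pi'|_G$ is a Kemeny consensus on $G$ the two gadget sums satisfy $\sum_i K(\pi'|_G,\pi_{v'_i}|_G)\le\sum_i K(\eta,\pi_{v'_i}|_G)$, so they cancel and $\sum_j\sum_p K(\xi_j,\pi_{v_p})<3\sum_p K(\pi,\pi_{v_p})$, whence some $\xi_j$ (read back in $\mathcal{I}$) has smaller Kemeny rank than $\pi$. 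The step I expect to need the most care is the forward-direction calibration: one has to respect requirement (i), which is what makes the backward direction cancel cleanly, while still keeping enough blocks of the right patterns (requirement (ii)) to pin exactly $3-k$ of the three net gains to zero --- the parity bookkeeping (hence the even clone factor) and block counting are the only remaining routine points, and the ``at least'' cases need none of this machinery.
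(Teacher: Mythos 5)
Your handling of the ``at least one'' and ``at least two'' variants is exactly the paper's proof: the paper's entire argument is the remark that Inequality~\ref{eq1} in the proof of Theorem~\ref{th:Kc} survives when only some of the three voters strictly prefer the witness while the rest abstain, which is your backward direction, and the canonical witness $\sigma^1\sigma^2\sigma^3$ disposes of the forward direction. For the ``exactly one'' and ``exactly two'' variants you go well beyond the paper. The published proof claims that ``only the converse direction'' needs re-checking and is silent on how to produce, from a smaller-Kemeny-rank ranking in $\mathcal{I}$, a ranking of $\mathcal{I}'$ that \emph{exactly} $k$ of the three voters prefer with none against---the canonical witness is preferred by all three, so it does not serve as is. You have correctly isolated this as the real difficulty, and your clone-plus-blocks gadget is a genuine construction that the paper does not contain. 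The calibration arithmetic checks out: flipping a block on which $v'_i$ is the lone dissenter shifts the triple of net gains by $(+1,-1,-1)$ up to permutation of coordinates, the soundness direction still cancels the gadget contribution because $\pi'$ restricted to the blocks is a Kemeny consensus there (via Lemma~\ref{pres} applied to the block partition), and Inequality~\ref{eq1} then applies to the $C'$-part exactly as in the paper. So where the paper buys brevity at the price of leaving the completeness of the ``exactly'' reductions unargued, your version actually establishes it.

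Two loose ends are worth a sentence each in a write-up. First, the backward direction of the cloning step---that a ranking of the cloned instance beating the cloned $\pi$ yields one of the original instance beating $\pi$---requires the standard exchange argument that some Kemeny optimum keeps clones consecutive; this is folklore but not literally ``routine bookkeeping.'' Second, the ``exactly one'' case needs neither parity nor cloning: flipping $\Delta$ blocks on which $v'_1$ is the lone dissenter already yields net gains $(2\Delta,0,0)$ for any $\Delta\geq 1$, so the extra machinery is only needed for ``exactly two,'' where writing $\Delta=a+b$ with $a,b\geq 1$ forces $\Delta\geq 2$ (and cloning with $t=2$ already guarantees this).
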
\begin{proof}
We only need to argue why the converse direction still holds with the weaker assumption in Theorem~\ref{th:Kc}. Note that in the proof of Theorem~\ref{th:Kc}, Inequality~\ref{eq1} still holds if only one / at least one / exactly two / at least two of the three voters is non-abstaining and prefers $\sigma'$ to~$\pi'$, while the other voters abstain.
\qed\end{proof}

\begin{corollary}\label{cor:kemeny}
\dmrev{If \dmrev{any of} 4-\textsc{wurv}, 4-\textsc{surv}, 5-\textsc{wurv}, or 5-\textsc{surv} \dmrev{are} polynomial-time solvable, then we \dmrev{can} find a Kemeny consensus for three voters in polynomial time.}
\end{corollary}

\begin{proof}
This proof is illustrated in Figure~\ref{fi:comp}. By Lemma~\ref{eqevenodd1}, if the search version of 4-\dmrev{\textsc{wurv}} or 5-\dmrev{\textsc{wurv}} is \dmrev{polynomial-time solvable}, then the search version of 3-\textsc{all-closer-ranking} is also \dmrev{ polynomial-time solvable}. Now, if the latter is true, then by Theorem~\ref{th:Kc}, 3-\textsc{smaller-Kemeny-rank} is also \dmrev{ polynomial-time solvable}.
This would finally imply that finding a Kemeny consensus 
for \dmrev{three} voters is \dmrev{ polynomial-time solvable}, by Theorem~\ref{th:Kr}. An analogous result holds for 4-\textsc{surv} and 5-\textsc{surv} by Theorem~\ref{atmost5}.
\qed\end{proof}

\begin{figure}[th]
\centering
\begin{tikzpicture}[node distance=1.4cm,every node/.style={fill=white}, align=center]
  \node (start)             [activityStarts]              {4-\dmrev{\textsc{wurv}}, 5-\dmrev{\textsc{wurv}}, 4-\textsc{surv}, 5-\textsc{surv}};
  \node (onCreateBlock)     [process, below of=start]          {search version of 3-\textsc{all-closer-ranking}};
  \node (onStartBlock)      [process, below of=onCreateBlock]   {3-\textsc{smaller-Kemeny-rank}};
  \node (onResumeBlock)     [activityStarts, below of=onStartBlock]   {finding a Kemeny consensus for \dmrev{three} voters};
   
  \draw[->]             (start) --  node[right, text width=1.9cm]
                                   {Lemma~\ref{eqevenodd1}} (onCreateBlock);
  \draw[->]     (onCreateBlock) -- node[right,text width=2cm]{Theorem~\ref{th:Kc}} (onStartBlock);
  \draw[->]      (onStartBlock) -- node[right,text width=2cm]{Theorem~\ref{th:Kr}} (onResumeBlock);

  \end{tikzpicture}
\caption{
If any of 4-\dmrev{\textsc{wurv}}, 5-\dmrev{\textsc{wurv}}, 4-\textsc{surv}, and 5-\textsc{surv} is \dmrev{ polynomial-time solvable}, then via the depicted implications, finding a Kemeny consensus for three voters is \dmrev{ polynomial-time solvable}.
}
\label{fi:comp}
\end{figure}

\section{Summary and open questions}
\label{summary}
We studied weakly popular rankings, defined in \cite{VSW14}, and also introduced the notion of strongly popular rankings analogous to the concept of popularity found in the matching literature, which ignores abstaining voters. 
Then we showed that a ranking $\pi$ is weakly popular if and only if it is strongly popular 
assuming that 
the majority graph of the abstaining voters between $\pi$ and any other ranking $\sigma$ is acyclic.
Using this result we also proved that the two notions of popularity are equivalent for profiles with at most five voters. For profiles with six voters, however, we showed that this equivalence does not hold anymore.

We found the smallest constant $c$ for which any $c$-sorted ranking of a profile is weakly popular. For two or three voters, a topologically sorted ranking turned out always to be popular with respect to both of the popularity notions. For four voters this also holds as long as the majority graph of the voters is a tournament, but it does not hold in general. We explained that the problem of deciding whether there exists a ranking $\sigma$ that is preferred to a given ranking $\pi$ by a simple or absolute majority of voters for profiles with four of five voters boils down to the problem of deciding for three voters whether there is a ranking $\sigma$ \dmrev{that} they all prefer to~$\pi$. This problem, \dmrev{as} we showed, is polynomial-time solvable if the majority graph of the three voters is acyclic, but its complexity is open in general. Importantly, if it were polynomial-time solvable, this would imply the polynomial-time solvability of the well-known Kemeny consensus problem for three voters, whose complexity is currently open.

The study of popular rankings can be extended into various directions. We now list some 
open problems that our work poses, starting with a question already raised by van Zuylen et al.~\cite{VSW14}, which we made some progress on.

\begin{enumerate}
    \item Determine the complexity of deciding whether a popular ranking exists for an instance with arbitrary~$n$. Our Lemmas~\ref{cl:twovoters}, \ref{3voters}, and~\ref{cl:fourvotersa} show that for at most \dmrev{three} voters and for \dmrev{four} voters with an acyclic tournament as the majority graph, the existence of weakly/strongly popular rankings can be checked efficiently. Besides this, Theorem~\ref{tightc} gives a sufficient condition for the existence of a weakly popular ranking for instances with arbitrary~$n$. 
    \item Determine the complexity of $3$-\textsc{all-closer-ranking}.
    \item Construct an example showing that for \emph{any} $n>5$, the two notions of popularity are not equivalent. 
    Theorem~\ref{tightc} might prove to be helpful here.
\end{enumerate}

Finally, popular rankings can be defined and studied in instances where ties in the rankings are allowed, or the rankings are not necessarily complete. Also, besides the Kendall distance, other metrics on rankings can also be applied, such as the Spearman distance\dmrev{~\cite{Spe04}}.

\paragraph{Acknowledgement} We thank Markus Brill for fruitful discussions, and the reviewers of 
\dmrev{earlier versions of this paper for their valuable suggestions, which have greatly helped to improve the presentation of this paper.} Sonja Kraiczy was supported by Undergraduate Research Bursary 19-20-66 from the London Mathematical Society, \dmrev{ by the School of Computing Science, University of Glasgow, by EPSRC studentship EP/T517811/1 and by Merton College, Oxford}. \'{A}gnes Cseh was supported by OTKA grant K128611 and the J\'anos Bolyai Research Fellowship. David Manlove was supported by EPSRC grant EP/P028306/1.  For the purpose of open access, the authors have applied a Creative Commons Attribution (CC BY) licence to any Author Accepted Manuscript version arising from this submission.


\bibliographystyle{abbrv}
\bibliography{bibliography}


\appendix
\section{\dmrev{Calculations relating to Figure \ref{fi:ex} and the proof of Theorem~\ref{example}}}
\label{app:preferences}
\noindent
We remind the reader that the voters' rankings are as follows.
\begin{align*} 
\pi_{v_{1}} & = [1,2,3],[6,4,5],[8,9,7] \\ 
\pi_{v_{2}} & =[2,3,1],[4,5,6],[9,7,8] \\
\pi_{v_{3}} & =[3,1,2],[5,6,4],[7,8,9]\\
\pi_{v_{4}} & =[1,2,3],[4,5,6],[7,8,9] = \sigma_2\\
\pi_{v_{5}} & = [1,2,3],[5,4,6],[9,7,8]\\ 
\pi_{v_{6}} & = [1,2,3],[5,6,4],[7,9,8]
\end{align*}
Consider the two rankings of the candidates $\sigma_{1}=[1,2,3],[5,6,4],\allowbreak[9,7,8]$ and $\sigma_{2}=[1,2,3],[4,5,6],[7,8,9]$. Below we discuss the roles of the voters and we justify them with calculations and observations. Trivially, $v_{4}$ prefers $\sigma_{2}$ to~$\sigma_{1}$.

\paragraph{Voters $v_{1},v_{2},v_{3}$ abstain in the vote between \dmrev{$\sigma_{1}$ and~$\sigma_{2}$}} We first discuss the three impartial voters and justify why they indeed are impartial between $\sigma_{1}$ and $\sigma_{2}$. Note that each of $v_{1},v_{2},v_{3}$ agrees \dmrev{with $\sigma_{1}$} on one triple, and \dmrev{agrees with $\sigma_{2}$} on one triple. For the remaining \dmrev{two triples in each case}, each of these three voters agrees with neither the corresponding triples in $\sigma_{1}$, nor the \dmrev{ones} in $\sigma_{2}$, but instead \dmrev{has} distance $2$ to \dmrev{each} of these. For example, voter $v_{1}$ agrees \dmrev{with both $\sigma_{1}$ and $\sigma_{2}$} on the first triple, but agrees \dmrev{with neither of them} on the other two triples, and \dmrev{ $K([6,4,5],[5,6,4])=K([6,4,5],[4,5,6])=2$} and $K([8,9,7],[9,7,8])=K([8,9,7],[7,8,9])=2$. 
\dmrev{Hence $K(\pi_{v_1},\sigma_1)=K(\pi_{v_1},\sigma_2)=4$.} 
Voter $v_{2}$ agrees with $\sigma_{2}$ on the second triple, \dmrev{and agrees} with $\sigma_{1}$ on the third triple, \dmrev{while the distances to those triples she disagrees with are $K([2,3,1],[1,2,3])$}$=K([4,5,6],[5,6,4])=K([9,7,8],[7,8,9])=2$. 
\dmrev{Hence $K(\pi_{v_2},\sigma_1)=K(\pi_{v_2},\sigma_2)=4$.}  This can be checked similarly for voter~$v_{3}$.

\paragraph{\dmrev{Voters \dmrev{$v_5$ and $v_6$ each prefer} $\sigma_1$ to $\sigma_2$}} \dmrev{Each of $v_5$ and $v_6$ agrees with each of $\sigma_1$ and $\sigma_2$ on the first triple. Further, each of $v_5$ and $v_6$ agrees with $\sigma_1$ on one other triple, but is two swaps away from $\sigma_2$ with respect to the same triple.  For the remaining triple, each of $v_5$ and $v_6$ is one swap away from each of $\sigma_1$ and $\sigma_2$.  Hence $K(\pi_{v_5},\sigma_1)=K(\pi_{v_6},\sigma_1)=1 < 3=K(\pi_{v_5},\sigma_2)=K(\pi_{v_6},\sigma_2)$.}

\section{\dmrev{An instance admitting a weakly popular ranking, but no strongly popular ranking}}
\label{app:nosr}
\noindent
\dmrev{Consider the following profile involving 9 candidates and 8 voters:}
\begin{align*} 
\pi_{v_{1}} & = [1,2,3],[6,4,5],[8,9,7] \\ 
\pi_{v_{2}} & = [2,3,1],[4,5,6],[9,7,8] \\
\pi_{v_{3}} & = [3,1,2],[5,6,4],[7,8,9]\\
\pi_{v_{4}} & = [1,2,3],[4,5,6],[8,9,7]\\
\pi_{v_{5}} & = [1,2,3],[5,6,4],[7,8,9]\\
\pi_{v_{6}} & = [2,3,1],[4,5,6],[7,8,9]\\ 
\pi_{v_{7}} & = [2,3,1],[5,6,4],[8,9,7]\\
\pi_{v_{8}} & = [2,3,1],[5,6,4],[8,9,7] = \pi_{v_{7}}
\end{align*}

Our computer simulations (\dmrev{the program code is available from} \url{https://github.com/SonjaKrai/PopularRankingsExampleCheck}) confirmed that the unique weakly popular ranking is \dmrev{$\pi:= [2,3,1],[5,6,4],[8,9,7] (= \pi_{v_{7}}  = \pi_{v_{8}} )$. 
Let $\sigma=[1,2,3],[4,5,6],[7,8,9]$. \dmrev{Note that} $K(\pi_{v_i},\sigma)=K(\pi_{v_i},\pi)=4$ for \dmrev{$1\leq i\leq 3$}, $2=K(\pi_{v_i},\sigma)<K(\pi_{v_i},\pi)=4$ for $4\leq i \leq 6$ and $6=K(\pi_{v_i},\sigma)>K(\pi_{v_i},\pi)=0$ for \dmrev{$7\leq i\leq 8$}. 
 \dmrev{Hence three voters prefer $\sigma$ to $\pi$, two voters prefer $\pi$ to $\sigma$, and four voters abstain.  It follows that $\sigma$ is more popular than $\pi$ in the simple sense.}}

\end{document}